\newtheorem{lemma}{Lemma}
\newtheorem{theorem}{Theorem}
\newtheorem{prop}{Proposition}
\theoremstyle{remark}
\newtheorem{remark}{Remark}
\theoremstyle{definition}
 \numberwithin{equation}{section}
\DeclareMathOperator{\sgn}{sgn}
\newcommand{\D}{\displaystyle}
\numberwithin{equation}{section}
\newcounter{comment}
\begin{document}

\title{The distribution function for the maximal height of $N$ non-intersecting Bessel paths}
\date{\today}
\author{Dan Dai$^{\ast}$ and Luming Yao$^{\dag}$}

\maketitle
\begin{abstract}
    In this paper, we consider $N$ non-intersecting Bessel paths starting at $x=a\geq 0$, and conditioned to end at the origin $x=0$. We derive the explicit formula of the distribution function for the maximum height. Depending on the starting point $a>0$ or $a=0$, the distribution functions are also given in terms of the Hankel determinants associated with the multiple  discrete orthogonal polynomials or discrete orthogonal polynomials, respectively.
\end{abstract}

\vspace{5cm}

\noindent \textit{2010 Mathematics Subject Classification.} Primary  60J65, 33C47. \\
\noindent \textit{Keywords and phrases}: non-intersecting Bessel paths; maximum distribution; multiple orthogonal polynomials; Hankel determinant.


\vspace{5mm}

\hrule width 65mm

\vspace{2mm}

\begin{description}

\item \hspace*{5mm}$\ast$ Department of Mathematics, City University of
Hong Kong, Hong Kong. \\
Email: \texttt{dandai@cityu.edu.hk}

\item \hspace*{5mm}$\dag$ Department of Mathematics, City University of
Hong Kong, Hong Kong. \\
Email: \texttt{lumingyao2-c@my.cityu.edu.hk} 

\end{description}

\newpage

\section{Introduction} \label{sec:intro}

Random walk models play an important role in various areas in physics, chemistry and computer sciences. For example, they are introduced to explain basic concepts in statistical physics \cite{reif} and stochastic algorithms \cite{motwani}. When a proper diffusion scaling limit is taken, the random walk models are reduced to  Brownian motion models. The standard one-dimensional Brownian motion $\{W_t : t \ge 0\}$ initiated at $x$ is a stochastic process with the following properties (cf. \cite{borodin}):
\begin{itemize}
    \item {$W_0=x$ almost surely;}
    \item{$W_t$ is continuous almost surely;}
    \item {for all $0=t_0<t_1<\cdots<t_n$, the increments $W_{t_i}-W_{t_{i-1}}$ are independent and normally distributed, with $\mathbf{E}(W_{t_i}-W_{t_{i-1}})=0$ and $\mathbf{E}(W_{t_i}-W_{t_{i-1}})^2=t_i-t_{i-1}$.}
\end{itemize}
There is a vast literature in the study of Brownian motions, among which people are interested in the maximal height of the outermost path in different non-intersecting Brownian motions. For example, let $0 \leq b_1^{(BE)} (t) < b_2^{(BE)} (t) < \cdots < b_N^{(BE)}(t)$ be $N$ non-intersecting Brownian excursions, i.e. non-intersecting Brownian motions with an absorbing wall located at $x=0$, conditioned to
have the same starting point $x = 0$ and return to the origin at the end. The transition probability for a single Brownian motion with an absorbing wall at $x=0$, passing from $x$ to $y$ over the time interval $t$, is given by
\begin{equation} \label{absorb-prob}
    p_{abs}(t, y|x)=\frac{1}{\sqrt{2\pi t}}\left(e^{-\frac{(x-y)^2}{2t}} - e^{-\frac{(x+y)^2}{2t}}\right).
\end{equation}
Let $M$ be a positive constant. Then, the probability that the maximal height of the outermost path $\max\limits_{0<t<1}b_N^{(BE)}(t)$ is less than $M$ is given explicitly as follows:
\begin{equation} \label{BE-case}
\begin{split}
  &\mathbb{P}(\max_{0<t<1}b_N^{(BE)}(t)<M) \\
  & \qquad =\frac{2^{-N/2} \pi^{2N^2+N/2}}{M^{N(2N+1)}N! \prod_{k=0}^{N-1}(2k+1)!}\sum_{\mathbf x \in \mathbb{Z}^N}(\Delta(\mathbf {x^2}))^2 \left(\prod_{j=1}^N x_j^2 \right) \exp \biggl\{\frac{- \pi^2}{2M^2}\sum_{j=1}^N x_j^2 \biggr\},
\end{split}
\end{equation}
where $\Delta(\mathbf x)$ is the Vandermonde determinant
\begin{equation} \label{vandermond}
\Delta(\mathbf x)=\prod_{1 \le j <k \le N}(x_k-x_j) \qquad \textrm{with} \quad \mathbf x=(x_1, x_2, \cdots, x_N).
\end{equation}
The formula \eqref{BE-case} was first obtained by Schehr et al. \cite{gregory2008} through  a path integral approach. Later, it was derived by using the Karlin-McGregor formula in Kobayashi et al. \cite{naoki2008}, and by lattice paths in  Feierl \cite{Feierl2012}. One may also consider $N$ non-intersecting Brownian motions $0 \leq b_1^{(R)} (t) < b_2^{(R)} (t) < \cdots < b_N^{(R)}(t)$ with a  reflecting wall located at $x=0$, conditioned to
have the same starting point $x = 0$ and return to the origin at the end. In this case, the transition probability is
\begin{equation} \label{reflecting-prob}
    p_{ref}(t, y|x)=\frac{1}{\sqrt{2\pi t}}\left(e^{-\frac{(x+y)^2}{2t}}+e^{-\frac{(x-y)^2}{2t}}\right),
\end{equation}
and the probability for the maximal height is given by
\begin{equation}\label{r}
\begin{split}
   & \mathbb{P}(\max_{0<t<1}b_N^{(R)}(t)<M) \\
   & \qquad =\frac{2^{-N/2} \pi^{2N^2-3N/2}}{M^{N(2N-1)}N! \prod_{k=0}^{N-1}(2k)!}\sum_{\mathbf x \in \{\mathbb{Z}-1/2\}^N}(\Delta(\mathbf {x^2}))^2\exp \biggl\{\frac{- \pi^2}{2M^2}\sum_{j=1}^N x_j^2 \biggr\};
\end{split}
\end{equation}
see Liechty \cite[Eq. (1.18)]{Karl2012}. It is worth mentioning that the above two probabilities \eqref{BE-case} and \eqref{r} can be put in terms of Hankel determinants of discrete Gaussian orthogonal polynomial. Based on this observation, Liechty \cite{Karl2012} applied the Riemann-Hilbert approach to rigorously prove that,  in a proper scaling limit as $N \to \infty$, the limiting distribution of the maximal height in both cases converges to the Tracy-Widom distribution for the largest eigenvalue of the Gaussian orthogonal ensemble. This justifies the formal derivations first given by  Forrester, Majumdar and Schehr \cite{For:Maj:Sch2011}.

In this paper, we focus on the distribution of the maximal height of the outermost path in $N$ non-intersecting Bessel paths. Recall that if $\{\mathbf{X}(t): t \ge 0\}$ is a $d$-dimensional Brownian motion, then the diffusion process
\begin{equation} \label{Rt-def}
R(t)=||\mathbf X(t)||_2=\sqrt{X_1(t)^2+\cdots+X_d(t)^2}, \qquad t \geq 0,
\end{equation}
is called a $d$-dimensional Bessel process or a Bessel process of order $\alpha$ with parameter $\alpha=\frac{d}{2}-1$; see \cite{borodin, karatzas, revuz}.  Bessel processes have important application in financial mathematics due to their close relation to financial models such as geometric Brownian motion and Cox-Ingersoll-Ross processes; see G\"{o}ing-Jaeschke and Yor \cite{going} and references therein. When $d=1$, or equivalently $\alpha=-\frac{1}{2}$, the Bessel process $R(t)$ reduces to the Brownian motion with a reflecting wall located at the origin \eqref{reflecting-prob}. While when $d=3$, or equivalently $\alpha=\frac{1}{2}$, the transition probability of $R(t)$ becomes
\begin{equation} \label{prob-alpha-1/2}
 p(t, y|x)=\frac{y}{x}\frac{1}{\sqrt{2\pi t}}\left(e^{-\frac{(x-y)^2}{2t}}-e^{-\frac{(x+y)^2}{2t}}\right),
\end{equation}
which is closely related to the Brownian motion with an absorbing wall located at the origin \eqref{absorb-prob}; see  Katori and  Tanemura \cite{makoto2004, makoto2007}. In recent years, there has been considerable interest in the study of non-intersecting (squared) Bessel paths;
for example, see \cite{Del:Kui:Zha2012, Katori2008, Kuijlaars2009, Kuijlaars2011}.

Similar to the non-intersecting Brownian motions mentioned above, we consider a model of $N$ non-intersecting Bessel paths: $\{b_j(t)\}_{j=1}^N$. All paths start at $x=a \geq 0$, remain non-negative and are conditioned to end at time $t=1$ at $x=0$. That is,
\begin{equation} \label{bm-model}
\begin{split}
 & b_1(0)=b_2(0)=\cdots=b_N(0)=a,\\
    & b_1(1)=b_2(1)=\cdots=b_N(1)=0,\\
    & 0 \le b_1(t)<b_2(t)< \cdots <b_N(t) \quad \text{for} \quad  0<t<1.
    \end{split}
\end{equation}
The transition probability for a single particle passes form $x$ to $y$ over the time interval $t$ is given by (cf. \cite{borodin, ito1974})
\begin{align}
    p(t,y|x)&=\frac{1}{t}\frac{y^{\alpha +1}}{x^{\alpha}}e^{-\frac{x^2+y^2}{2t}}I_{\alpha} \left( \frac{xy}{t} \right), \qquad x>0, \label{p1} \\
    p(t,y|0)&=\frac{y^{2\alpha +1}}{2^{\alpha}\Gamma{(\alpha+1)}t^{\alpha+1}}e^{-\frac{y^2}{2t}} \label{p11}
\end{align}
for $y \ge 0$, $t>0$ and $\alpha >-1$, where $I_{\alpha}$ is the modified Bessel function of the first kind
\begin{equation} \label{defI}
    I_{\alpha}(z)=\sum^{\infty}_{k=0}\frac{(\frac{z}{2})^{2k+\alpha}}{k! \, \Gamma (k+\alpha +1)}.
\end{equation}
One can see from \eqref{Rt-def} that when $d=2(\alpha +1)$ is an integer, the Bessel process is the distance to the origin of a $d$-dimensional standard Brownian motion.

The main results of the paper are stated in the next section.

\section{Statement of results}\label{sec:results}

\subsection{The maximal distribution}

Our first result is the explicit expression of the probability of the maximal height of the outermost path $\max\limits_{0<t<1}b_N(t)$.

\begin{theorem}\label{the:theorem}
For $\alpha > -1$, let
\begin{equation} \label{bessel-zeros}
x_{1,\alpha} < x_{2,\alpha} < \cdots < x_{n,\alpha} < \cdots
\end{equation}
be the zeros of the Bessel function $J_{\alpha}(x)$.     For $ N \ge 1$, $M \ge a$ and the starting point $a>0$, we have
    \begin{equation} \label{deter}
      \begin{split}
       &  \mathbb{P}(\max_{0<t<1}b_N(t)<M)  \\
       & \qquad =c_N(\alpha) M^{-\frac{N(3N+2\alpha+1)}{2}}  \det_{1 \le i, j \le N} \left[\sum_{n=1}^{\infty}\frac{(-1)^{i-1}x_{n, \alpha}^{i+2j+\alpha-3}J_{\alpha}^{(i-1)}(\frac{a}{M}x_{n,\alpha})e^{-\frac{x_{n, \alpha}^2}{2M^2}}}{J_{\alpha+1}^2(x_{n, \alpha})} \right],
    \end{split}
    \end{equation}
    where $c_N(\alpha)$ is a constant independent of $M$:
    \begin{equation} \label{cn-def}
      c_N(\alpha)=\frac{2^{\frac{N(3-N)}{2}} e^{\frac{a^2N}{2}}}{a^{\frac{N(N+2\alpha-1)}{2}}\prod_{j=1}^N \Gamma(j)}.
    \end{equation}
    When the starting point is the origin, i.e., $a = 0$, we have
    \begin{equation} \label{deter-a=0}
        \mathbb{P}(\max_{0 < t < 1} b_N(t) <M)= \widetilde{c}_N(\alpha) M^{-2N(N+\alpha)}\det_{1 \le i, j \le N} \left[\sum_{n=1}^{\infty}\frac{x_{n, \alpha}^{2i+2j+2\alpha-4}e^{-\frac{x_{n, \alpha}^2}{2M^2}}}{J_{\alpha+1}^2(x_{n, \alpha})} \right],
    \end{equation}
    where the constant $\widetilde{c}_N(\alpha)$ is given by
    \begin{equation} \label{cn2-def}
    \widetilde{c}_N(\alpha)=\frac{2^{2N- \alpha N-N^2}}{\prod_{j=1}^N \Gamma(j) \Gamma(\alpha +j)}.
    \end{equation}

\end{theorem}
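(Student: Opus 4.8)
The plan is to start from the Karlin--McGregor formula for non-intersecting diffusions killed at the level $x=M$. Let $p^{M}(t,y|x)$ denote the transition density for a single Bessel process of order $\alpha$ killed upon hitting $M$ (i.e. with an absorbing wall at $x=M$, and the natural behaviour at $0$). Then
\begin{equation}\label{eq:KMG}
\mathbb{P}\Bigl(\max_{0<t<1}b_N(t)<M\Bigr)
= \lim_{y_j\to 0}\frac{\displaystyle\det_{1\le i,j\le N}\bigl[p^{M}(1,y_j|a)\bigr]}{\text{(normalisation from the free paths)}},
\end{equation}
where the limit $y_j\to 0$ (with the appropriate confluent scaling) implements the condition that all paths end at the origin, and the denominator is the same construction without the wall, so that the ratio is a genuine conditional probability. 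The first step is therefore to write down $p^{M}$ explicitly. On the interval $(0,M)$ the relevant eigenfunctions of the Bessel generator with Dirichlet condition at $M$ and the $L^2$-type condition at $0$ are $y\mapsto y^{\alpha+1/2}J_{\alpha}(x_{n,\alpha}\,y/M)$ with eigenvalues $x_{n,\alpha}^2/(2M^2)$, where $x_{n,\alpha}$ are the Bessel zeros from \eqref{bessel-zeros}. Using the standard Fourier--Bessel (Dini) expansion on $(0,1)$, whose normalisation constant involves $J_{\alpha+1}(x_{n,\alpha})^2$, one obtains a series representation
\begin{equation}\label{eq:pM}
p^{M}(t,y|x)=\frac{2}{M^{2}}\,\frac{y^{\alpha+1}}{x^{\alpha}}\sum_{n=1}^{\infty}
\frac{x_{n,\alpha}^{\,}}{J_{\alpha+1}(x_{n,\alpha})^{2}}\,
J_{\alpha}\!\Bigl(\tfrac{x_{n,\alpha}x}{M}\Bigr)J_{\alpha}\!\Bigl(\tfrac{x_{n,\alpha}y}{M}\Bigr)
e^{-\frac{x_{n,\alpha}^{2}t}{2M^{2}}},
\end{equation}
up to bookkeeping of powers of $M$ and $2$ that I will fix by comparing the $M\to\infty$ limit against \eqref{p1}. (Strictly I should first verify that $\max_{0<t<1}b_N(t)<M$ has the same law as ``all $N$ paths are killed at $M$'', which follows from non-intersection: the top path is the only one that can reach $M$.)

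Next I would substitute \eqref{eq:pM} into \eqref{eq:KMG}. Because $x=a$ is common to all columns, the $x$-dependence factors out of the determinant only after I expand each $J_{\alpha}(x_{n,\alpha}y_j/M)$ in powers of $y_j$ and each $J_\alpha(x_{n,\alpha}a/M)$ is differentiated. Concretely, I extract from the determinant the confluent limit $y_j\to 0$ by the usual device: $\det[f(y_j)]/\Delta(\mathbf y)\to \det[f^{(j-1)}(0)]/\prod(j-1)!$ as the $y_j$ coalesce; here $f_n(y)=y^{\alpha+1}J_\alpha(x_{n,\alpha}y/M)$ and its Taylor coefficients at $0$ are explicit monomials in $x_{n,\alpha}/M$ times Gamma factors. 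Performing the same confluent limit on the free (wall-less) normalisation in the denominator — which is a classical computation giving a product of Gammas and a power of $a$ and of the ``time'' — produces the constant $c_N(\alpha)$ in \eqref{cn-def}. Meanwhile the starting point $a$ enters through $J_\alpha^{(i-1)}(a x_{n,\alpha}/M)$; multilinearity of the determinant in its rows lets me organise the $n$-sums so that the $(i,j)$ entry is exactly $\sum_{n}(-1)^{i-1}x_{n,\alpha}^{i+2j+\alpha-3}J_\alpha^{(i-1)}(ax_{n,\alpha}/M)e^{-x_{n,\alpha}^2/(2M^2)}/J_{\alpha+1}(x_{n,\alpha})^2$, after pulling the powers of $M$ out row by row and column by column; tallying these powers gives the exponent $-\tfrac{N(3N+2\alpha+1)}{2}$ of $M$. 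This proves \eqref{deter}.

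For the case $a=0$ in \eqref{deter-a=0} I would not pass to the limit $a\to0$ in \eqref{deter} directly (the prefactor $c_N(\alpha)$ is singular there), but instead redo the argument with the initial density \eqref{p11} in place of \eqref{p1}: now both ends are confluent, so I take the confluent limit at $x=a\to0$ as well, replacing $J_\alpha^{(i-1)}(ax_{n,\alpha}/M)$ by the leading Taylor monomial $\propto (x_{n,\alpha}/M)^{2(i-1)+\alpha}$ and picking up an extra $\prod\Gamma(\alpha+j)$ in the normalisation; this yields the symmetric exponents $2i+2j+2\alpha-4$, the power $M^{-2N(N+\alpha)}$, and the constant $\widetilde c_N(\alpha)$ in \eqref{cn2-def}.

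The main obstacle I anticipate is analytic rather than algebraic: justifying the two interchanges of limit and summation — the termwise confluent limit $y_j\to0$ inside the infinite Fourier--Bessel series, and the interchange of that series with the determinant (equivalently, with the finitely many products defining it). Both require uniform control of the tails of $\sum_n$, which I would get from the asymptotics $x_{n,\alpha}\sim n\pi$ and $J_{\alpha+1}(x_{n,\alpha})^2\sim \tfrac{2}{\pi x_{n,\alpha}}$ together with the Gaussian factor $e^{-x_{n,\alpha}^2 t/(2M^2)}$, which makes every series absolutely convergent and dominates the Taylor remainders uniformly for $y_j$ in a neighbourhood of $0$. The remaining bookkeeping — matching powers of $2$, $M$, $a$ and the Gamma products against \eqref{cn-def} and \eqref{cn2-def} — is routine once the structure above is in place.
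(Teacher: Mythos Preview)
Your strategy is correct and coincides with the paper's at the level of ingredients: Karlin--McGregor in the alcove $[0,M]$, the Fourier--Bessel expansion for $p^M$ (which matches the paper's formula once the stray factor $x_{n,\alpha}$ in your series is removed), and confluent limits at both endpoints to produce the determinant. The organization, however, is genuinely different. The paper first applies an Andr\'eief/Cauchy--Binet identity (its Lemma~1) to split $\det_{i,j}[p^M(1,y_j\mid x_i)]$ into a sum over $\mathbf n\in\mathbb{N}^N$ of products $\det[J_\alpha(x_{n_j,\alpha}x_i/M)]\cdot\det[J_\alpha(x_{n_j,\alpha}y_i/M)]$, evaluates each factor via a Schur-function expansion (its Lemma~2), takes the ratio with $q(\mathbf x,\mathbf y)$ to reach an intermediate sum over $\mathbf n$ weighted by $\Delta(\mathbf{x_{n,\alpha}})\Delta(\mathbf{x_{n,\alpha}^2})$, and only then recombines that sum back into a single determinant via Cauchy--Binet in reverse. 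Your route is more direct: apply the confluent device $\det_{i,j}[g_j(x_i)]/\Delta(\mathbf x-\mathbf a)\to\det_{i,j}[g_j^{(i-1)}(a)]/\prod_i(i-1)!$ row-wise (with $g_j(x)=p^M(1,y_j\mid x)$), then column-wise in $y$, so that the single $n$-sum never leaves the matrix entry. This buys you a shorter argument and no Schur-function bookkeeping; what the paper's detour buys is the explicit $\sum_{\mathbf n}(\cdots)\Delta(\mathbf{x_{n,\alpha}})\Delta(\mathbf{x_{n,\alpha}^2})$ representation, which is the natural bridge to the Hankel-determinant reformulation in Theorem~\ref{thm-op}.

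Two small corrections to make your write-up rigorous. First, your Karlin--McGregor ratio must carry \emph{both} confluent limits $x_i\to a$ and $y_j\to 0$; as written, with all starting points already equal to $a$, the numerator determinant is identically zero. Second, on the $y$-side the correct confluent variable is $y^2$, not $y$: after extracting $\prod_j y_j^{2\alpha+1}$, the entries are analytic in $y_j^2$, so the relevant Vandermonde is $\Delta(\mathbf y^2)$ and the $(j-1)$-th ``derivative'' picks out the coefficient of $y^{2(j-1)}$, which is what produces the exponent $2j$ (and, for $a=0$, the symmetric $2i$) in the final formula.
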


\begin{remark}
Although the expressions \eqref{deter} and \eqref{deter-a=0} appear different, they indeed agree with each other. To see it, from properties of the Bessel function \cite[Eq. (10.6.2)]{dlmf}, one gets the following expression for the higher order derivative of $J_{\alpha}(z)$:
 \begin{equation}\label{kthde}
        (-1)^n z^n J_{\alpha}^{(n)}(z) = \sum_{k=0}^n c_{n,k} z^k J_{\alpha+k}(z) \qquad \textrm{for } n \in \mathbb{N},
    \end{equation}
    with the coefficient $c_{n,n}=1$. We replace $J_{\alpha}^{(i-1)}$ in \eqref{deter} by the right-hand side of the above formula, then apply row operations to eliminate $J_\nu$ for $\nu < \alpha+i-1$. This gives us
    \begin{equation}
        \begin{split}
            \mathbb{P}(\max_{0<t<1}b_N(t)<M) &=\frac{2^{\frac{N(3-N)}{2}} e^{\frac{a^2N}{2}}}{a^{\frac{N(N+2\alpha-1)}{2}}\prod_{j=1}^N \Gamma(j)} M^{-\frac{N(3N+2\alpha+1)}{2} }\\
            &\qquad \times \det_{1 \le i, j \le N} \left[\sum_{n=1}^{\infty}x_{n, \alpha}^{i+2j-3}\frac{J_{\alpha+i-1}(\frac{a}{M}x_{n,\alpha})x_{n, \alpha}^{\alpha}e^{-\frac{x_{n, \alpha}^2}{2M^2}}}{J_{\alpha+1}^2(x_{n, \alpha})} \right].
            \end{split}
            \end{equation}
            Since $J_\alpha(z) \sim \frac{z^\alpha}{2^\alpha \Gamma(\alpha +1)}$  as $z \to 0$,  the determinant in the above formula behaves like
            \begin{equation*}
            \det_{1 \le i, j \le N} \left[\sum_{n=1}^{\infty}x_{n, \alpha}^{i+2j-3}\frac{J_{\alpha+i-1}(\frac{a}{M}x_{n,\alpha})x_{n, \alpha}^{\alpha}e^{-\frac{x_{n, \alpha}^2}{2M^2}}}{J_{\alpha+1}^2(x_{n, \alpha})} \right]  \sim \det_{1 \le i, j \le N} \left[ \frac{(\frac{a}{2M})^{\alpha+i - 1}}{\Gamma(\alpha+j)}\sum_{n=1}^{\infty}x_{n, \alpha}^{2i+2j-4}\frac{x_{n, \alpha}^{2\alpha}e^{-\frac{x_{n, \alpha}^2}{2M^2}}}{J_{\alpha+1}^2(x_{n, \alpha})} \right]
            \end{equation*}
            as $a \to 0$. Then, \eqref{deter-a=0} follows from the above two formulae.
\end{remark}

\begin{remark}
    When the starting point $a$ is the origin and  the number of Bessel paths $N$ is equal to 1, the distribution formula \eqref{deter-a=0} reduces to
    \begin{equation}
        \mathbb{P}(\max_{0 < t < 1} b(t) <M)= \frac{2^{1- \alpha }}{\Gamma(\alpha +1)} M^{-2(\alpha+1)}\sum_{ n =1}^{\infty}\frac{x_{n, \alpha}^{2\alpha} e^{-\frac{x_{n, \alpha}^2}{2M^2}}}{J_{\alpha +1}^2(x_{n, \alpha})}.
    \end{equation}
    This agrees with the results obtained by Pitman and  Yor \cite[Eq. (5)]{jim1999}.
\end{remark}

\subsection{Relation to multiple orthogonal polynomials}

The maximum distributions in \eqref{BE-case} and \eqref{r} can be put in terms of Hankel determinants of discrete Gaussian orthogonal polynomial; see Liechty \cite{Karl2012}. For the present model, we have similar results. Depending on the starting point $a>0$ or $a=0$, the distribution functions are given in terms of the Hankel determinants associated with \emph{multiple discrete orthogonal polynomials} of type II or \emph{discrete orthogonal polynomials}, respectively. For more properties of multiple orthogonal polynomials, one may refer to \cite{Arv:Cou:Van2003} and \cite[Ch. 23]{Ismail-book}.

In the case where the starting point $a$ is positive, the corresponding two discrete weight functions for the multiple orthogonal polynomial are
\begin{align} \label{mop-weight}
    w_1(x)=\frac{x^{\frac{\alpha}{2}}J_{\alpha}(\frac{a}{M}\sqrt{x})}{J_{\alpha+1}^2(\sqrt{x})}e^{-\frac{x}{2M^2}},  \qquad
    w_2(x)&=\frac{x^{\frac{\alpha+1}{2}}J_{\alpha+1}(\frac{a}{M}\sqrt{x})}{J_{\alpha+1}^2(\sqrt{x})}e^{-\frac{x}{2M^2}}, \quad M > a,
\end{align}
which are supported on the nodes $\{x_{n, \alpha}^2 \}_{n=1}^\infty$ with $x_{n, \alpha}$ being zeros of the Bessel function $J_\alpha(x)$ given in \eqref{bessel-zeros}. The moments $m_k^{(j)}, j=1,2$, are given by
\begin{align} \label{mk-def}
    m_k^{(1)}=\sum_{x \in \{x_{n, \alpha}^2, n \in \mathbb{N} \}} x^k w_1(x), \qquad
    m_k^{(2)}=\sum_{x \in \{x_{n, \alpha}^2, n \in \mathbb{N} \}} x^k w_2(x).
\end{align}
The  Hankel determinant is
\begin{equation} \label{mop-hankel}
 H_N:= \det\begin{pmatrix}
    m_0^{(1)} & m_1^{(1)} & \dots  & m_{N-1}^{(1)} \\
    m_1^{(1)} & m_2^{(1)} & \dots  & m_{N}^{(1)} \\
    \vdots & \vdots &  \ddots & \vdots \\
    m_{n_1-1}^{(1)} & m_{n_1}^{(1)} &  \dots  & m_{N+n_1-2}^{(1)}\\
    m_0^{(2)} & m_1^{(2)} & \dots  & m_{N-1}^{(2)} \\
    m_1^{(2)} & m_2^{(2)} & \dots  & m_{N}^{(2)} \\
    \vdots & \vdots &  \ddots & \vdots \\
    m_{n_2-1}^{(2)} & m_{n_2}^{(2)} &  \dots  & m_{N+n_2-2}^{(2)}
\end{pmatrix}
\end{equation}
with $N=n_1+n_2$ and
\begin{equation}
    n_1=\begin{cases}
            \frac{N}{2}, & \mbox{if } N \mbox{ is even,}\\
            \frac{N+1}{2},& \mbox{if } N \mbox{ is odd,}
        \end{cases}
    \quad\text{and}\quad
    n_2=\begin{cases}
            \frac{N}{2}, & \mbox{if } N \mbox{ is even,}\\
            \frac{N-1}{2},& \mbox{if } N \mbox{ is odd.}
        \end{cases}
\end{equation}
In the case  where the starting point is located at the origin, the discrete weight function for the orthogonal polynomial is
\begin{equation}
    \widetilde{w}(x)=\frac{x^{\alpha}e^{-\frac{x}{2M^2}}}{J_{\alpha+1}^2(\sqrt{x})}, \qquad M>0,
\end{equation}
which is also supported on the nodes $\{x_{n, \alpha}^2 \}_{n=1}^\infty$. The corresponding Hankel determinant is
\begin{equation} \label{op-hankel}
    \widetilde{H}_N:= \det\begin{pmatrix}
        \widetilde{m}_0 & \widetilde{m}_1 & \dots  & \widetilde{m}_{N-1} \\
        \widetilde{m}_1 & \widetilde{m}_2 & \dots  & \widetilde{m}_{N} \\
        \vdots & \vdots &  \ddots & \vdots \\
        \widetilde{m}_{N-1} & \widetilde{m}_{N} &  \dots  & \widetilde{m}_{2N-2}
    \end{pmatrix},
\end{equation}
where the moments $\widetilde{m}_k$ are given by
\begin{equation} \label{mk-def-2}
    \widetilde{m}_k=\sum_{x \in \{x_{n, \alpha}^2, n \in \mathbb{N} \}} x^k \widetilde{w}(x).
\end{equation}

Then, we have the following result.
\begin{theorem} \label{thm-op}
    Let the Hankel determinants $H_N$ and $\widetilde{H}_N$ be defined in \eqref{mop-hankel} and  \eqref{op-hankel}, respectively. For $a>0$, we have
    \begin{equation}\label{hankel}
        \mathbb{P}(\max_{0<t<1}b_N(t)<M)=c_N(\alpha)(-1)^{\frac{N(N-3)+2n_1^2}{4}} M^{-\frac{N(3N+2\alpha+1)}{2}} H_N,
    \end{equation}
    where $c_N(\alpha)$ is given in \eqref{cn-def}.
    When $a=0$, we have
    \begin{equation}\label{hankel-a=0}
        \mathbb{P}(\max_{0<t<1}b_N(t)<M)=\widetilde{c}_N(\alpha) M^{-2N(N+\alpha)} \widetilde{H}_N,
    \end{equation}
    where $\widetilde{c}_N(\alpha)$ is given in \eqref{cn2-def}.
\end{theorem}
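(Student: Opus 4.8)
The plan is to read off \eqref{hankel} and \eqref{hankel-a=0} directly from the closed-form determinants of Theorem~\ref{the:theorem}, turning the determinants in \eqref{deter} and \eqref{deter-a=0} into the Hankel determinants \eqref{mop-hankel} and \eqref{op-hankel} by elementary row operations, and then chasing the resulting sign.

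The case $a=0$ is immediate. In \eqref{deter-a=0} the $(i,j)$ entry equals $\sum_{n\ge1}x_{n,\alpha}^{2(i+j-2)+2\alpha}e^{-x_{n,\alpha}^2/(2M^2)}/J_{\alpha+1}^2(x_{n,\alpha})$, and since $x^k\widetilde w(x)$ evaluated at the node $x=x_{n,\alpha}^2$ is precisely $x_{n,\alpha}^{2k+2\alpha}e^{-x_{n,\alpha}^2/(2M^2)}/J_{\alpha+1}^2(x_{n,\alpha})$, this entry is exactly $\widetilde m_{i+j-2}$. Hence the determinant in \eqref{deter-a=0} is $\widetilde H_N$, and \eqref{hankel-a=0} follows.

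For $a>0$ I would start from the intermediate form already recorded in the first Remark, namely $\mathbb P(\max_{0<t<1}b_N(t)<M)=c_N(\alpha)M^{-N(3N+2\alpha+1)/2}\det_{1\le i,j\le N}[T_{i-1,j}]$, where $T_{k,j}:=\sum_{n\ge1}x_{n,\alpha}^{2j+\alpha+k-2}J_{\alpha+k}(\tfrac aM x_{n,\alpha})e^{-x_{n,\alpha}^2/(2M^2)}/J_{\alpha+1}^2(x_{n,\alpha})$; comparing with \eqref{mk-def} one has $T_{0,j}=m^{(1)}_{j-1}$ and $T_{1,j}=m^{(2)}_{j-1}$. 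The three-term recurrence $J_{\nu+1}(z)=\tfrac{2\nu}{z}J_\nu(z)-J_{\nu-1}(z)$ with $\nu=\alpha+k-1$ yields, after multiplying by $x_{n,\alpha}^{2j+\alpha+k-2}$ and summing over $n$, the recursion $T_{k,j}=\tfrac{2(\alpha+k-1)M}{a}T_{k-1,j}-T_{k-2,j+1}$ for $k\ge2$. Now introduce the $N$-vectors $R_{2r-1}=(m^{(1)}_{r-1},m^{(1)}_{r},\dots,m^{(1)}_{r+N-2})$ and $R_{2s}=(m^{(2)}_{s-1},m^{(2)}_{s},\dots,m^{(2)}_{s+N-2})$: these are exactly the rows of $H_N$ written in the interleaved order $R_1,R_2,R_3,\dots$. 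The structural observation is that the shift $j\mapsto j+1$ sends $R_\ell$ to $R_{\ell+2}$, so the recursion together with $\mathbf T_0=R_1$, $\mathbf T_1=R_2$ gives, by induction on $k$, that the vector $\mathbf T_k=(T_{k,1},\dots,T_{k,N})$ lies in $\operatorname{span}\{R_1,\dots,R_{k+1}\}$ with coefficient $(-1)^{\lfloor k/2\rfloor}$ on $R_{k+1}$. Consequently the $i$-th row $\mathbf T_{i-1}$ of the matrix $[T_{i-1,j}]$ equals $(-1)^{\lfloor(i-1)/2\rfloor}R_i$ plus a combination of $R_1,\dots,R_{i-1}$, i.e.\ these rows are obtained from $R_1,\dots,R_N$ by a lower-triangular transformation with diagonal $(-1)^{\lfloor(i-1)/2\rfloor}$; hence $\det_{ij}[T_{i-1,j}]=(-1)^{\sum_{i=1}^N\lfloor(i-1)/2\rfloor}$ times the determinant of the matrix whose rows are $R_1,\dots,R_N$. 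Rearranging these rows into the block order of \eqref{mop-hankel} (all $m^{(1)}$-rows, then all $m^{(2)}$-rows) multiplies the determinant by the sign of the corresponding de-interleaving permutation, giving $\det_{ij}[T_{i-1,j}]=\varepsilon_N H_N$ with an explicit $\varepsilon_N\in\{\pm1\}$, and \eqref{hankel} follows once $\varepsilon_N=(-1)^{(N(N-3)+2n_1^2)/4}$ is confirmed.

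The analytic part — the recurrence for $T_{k,j}$, the action of the $j$-shift on the $R_\ell$, and the triangularity — is routine given \eqref{kthde} and the Bessel recurrences. The one step that needs care, and which I expect to be the main obstacle, is the final sign: one must evaluate $\sum_{i=1}^N\lfloor(i-1)/2\rfloor$ modulo $2$, compute the parity of the de-interleaving permutation (its inversions being exactly the pairs consisting of an even position preceding an odd position, numbering $\sum_{s=1}^{n_2}(n_1-s)$), add the two contributions, and verify that the sum equals $\frac{N(N-3)+2n_1^2}{4}$ modulo $2$; this is cleanest when done separately for $N$ even and $N$ odd.
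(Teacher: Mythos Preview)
Your proposal is correct, and the overall shape---reduce the determinant in \eqref{deter} by row operations to one whose rows are the interleaved moment rows $R_1,\dots,R_N$, then de-interleave and chase the sign---is the same as the paper's. The organization differs, however. The paper works directly with the derivative form in \eqref{deter}: it proves a separate lemma expressing $z^{k}J_\alpha^{(k)}(z)$ as $[(-1)^{\lceil k/2\rceil}z^{k}+\text{lower}]$ times $J_\alpha$ or $J_{\alpha+1}$ plus an odd/even polynomial times the other, and uses this parity structure row by row to arrive at \eqref{moment}. You instead take the intermediate $J_{\alpha+k}$ form from the Remark as your starting point and then run the standard three-term recurrence $J_{\nu+1}=\tfrac{2\nu}{z}J_\nu-J_{\nu-1}$, which translates into the clean row recursion $\mathbf T_k=c_k\mathbf T_{k-1}-\sigma(\mathbf T_{k-2})$ with $\sigma(R_\ell)=R_{\ell+2}$; the triangularity and the leading coefficient $(-1)^{\lfloor k/2\rfloor}$ then fall out by a two-line induction. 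Your route avoids proving (and invoking) the derivative/parity lemma at the cost of composing two triangular reductions rather than one, and it makes the bookkeeping of the diagonal signs more transparent. The paper's route keeps everything in a single pass from \eqref{deter} but pays for it with the auxiliary lemma. Both land on the same interleaved matrix, and your description of the remaining sign computation (diagonal contribution $\sum_i\lfloor(i-1)/2\rfloor$ plus the inversion count $\sum_{s=1}^{n_2}(n_1-s)$ of the de-interleaving permutation, checked in the even/odd cases) is accurate.
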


\subsection{Special cases: $ \alpha = \pm \frac{1}{2}$}

As mentioned in Section \ref{sec:intro},  the Bessel paths reduce to the Brownian motion with a reflecting wall at the origin when $\alpha=-\frac{1}{2}$. Indeed, recall that the Bessel functions reduce to the elementary functions when $\alpha = \pm \frac{1}{2}$:
\begin{equation}
    J_{-\frac{1}{2}}(x)=\sqrt{\frac{2}{\pi x}}\cos x,  \quad \quad J_{\frac{1}{2}}(x)=\sqrt{\frac{2}{\pi x}}\sin x \quad \textrm{and} \quad I_{-\frac{1}{2}}(x)=\sqrt{\frac{2}{\pi x}} \cosh x;
\end{equation}
see \cite[Sec. 10.16]{dlmf}. Then, the transition probability \eqref{p1} becomes  \eqref{reflecting-prob}. When the starting point $a =0$, the model \eqref{bm-model} exactly reduces to the non-intersecting Brownian motion with a reflecting wall at the origin, conditioned to
have the same starting point $x = 0$ and return to the origin at the end. Then, the probability for the maximum height given in \eqref{deter-a=0} also reduces to that of the Brownian motion with a reflecting wall given in \eqref{r}. To see it, note that the zeroes in \eqref{bessel-zeros} are $x_{n, \alpha}=(n-\frac{1}{2})\pi$, $n \in \mathbb{N}$ when  $\alpha = -\frac{1}{2}$. Then,
the determinant in \eqref{deter-a=0} becomes
\begin{equation}
    \det_{1 \le i, j \le N} \left[\sum_{n=1}^{\infty}x_{n, \alpha}^{2i+2j-4}\frac{x_{n, \alpha}^{2\alpha}e^{-\frac{x_{n, \alpha}^2}{2M^2}}}{J_{\alpha+1}^2(x_{n, \alpha})} \right]=\det_{1 \le i, j \le N} \left[\sum_{n=1}^{\infty}\frac{\pi^{2i+2j-3}}{2}(n-\frac{1}{2})^{2j+2i-4}e^{-\frac{(n-\frac{1}{2})^2 \pi ^2}{2M^2}} \right].
\end{equation}
Moreover, as $\Gamma(j) \Gamma(-\frac{1}{2} +j)=2^{2-2j}\sqrt{\pi} \, \Gamma(2j-1)$, the constant in \eqref{cn2-def} is
\begin{equation}
    \widetilde{c}_N(-\frac{1}{2})=\frac{2^{2N+1/2 N-N^2}}{\prod_{j=1}^N \Gamma(j) \Gamma(-\frac{1}{2} +j)}=\frac{2^{2N+1/2 N-N^2}}{\pi ^\frac{N}{2}2^{N-N^2}\prod_{j=0}^{N-1}(2j)!}.
\end{equation}
With the above two formulae, the probability in \eqref{deter-a=0} reduces to
\begin{align}
    \mathbb{P}(\max_{0 < t < 1} b_N(t) <M) & =\frac{2^{N/2} \pi^{2N^2-3N/2}}{M^{N(2N-1)} \prod_{j=0}^{N-1}(2j)!}\det_{1 \le i, j \le N} \left[\sum_{n=1}^{\infty}(n-\frac{1}{2})^{2i+2j-4}e^{-\frac{(n-\frac{1}{2})^2 \pi ^2}{2M^2}} \right] \nonumber \\
    & = \frac{2^{-N/2} \pi^{2N^2-3N/2}}{M^{N(2N-1)} \prod_{k=0}^{N-1}(2k)!}\det_{1 \le i, j \le N} \left[\sum_{x \in \{ \mathbb{Z}-1/2\}}x^{2i+2j-4}e^{-\frac{x^2 \pi ^2}{2M^2}} \right],
\end{align}
which agrees with \eqref{r}.

When $\alpha=\frac{1}{2}$, the Bessel paths are closely related to the Brownian motions with an absorbing wall at the origin; see the relation between their transition probabilities in \eqref{prob-alpha-1/2}. Although there is an additional factor $y/x$ in \eqref{prob-alpha-1/2}, the maximum distributions are exactly the same. Indeed, since the zeroes in \eqref{bessel-zeros} are $x_{n,\alpha}=n\pi$, $n \in \mathbb{Z}$ when  $\alpha = \frac{1}{2}$, we adopt similar computations as above and obtain from \eqref{deter-a=0}
\begin{equation}\label{alpha=1/2}
\mathbb{P}(\max_{0<t<1}b_N(t)<M) = \frac{2^{-N/2} \pi^{2N^2+N/2}}{M^{N(2N+1)} \prod_{k=0}^{N-1}(2k+1)!}\det_{1 \le i, j \le N} \left[\sum_{x \in \mathbb{Z} }x^{2i+2j-2}e^{-\frac{x^2 \pi ^2}{2M^2}} \right],
\end{equation}
which agrees with \eqref{BE-case}.

The rest of the paper is organized as follows. In Section \ref{Sec:preliminary}, we apply the Karlin-McGregor formula and express the probability $\mathbb{P}(\max\limits_{0 < t < 1} b_N(t) <M)$ in terms of a ratio of two determinants. Properties for some general determinants are also discussed. In Section \ref{sec:derivation}, we derive the asymptotics for the determinants appearing in  $\mathbb{P}(\max\limits_{0 < t < 1} b_N(t) <M)$. Then, our main results are proved with the asymptotic results. Finally, in Section \ref{sec:discussion}, we present some numerical computations and conclude with a summary.

\section{Some preliminary work} \label{Sec:preliminary}

\subsection{The Karlin-McGregor formula}

Introduce the notations
\begin{equation} \label{vector-notations}
\begin{split}
 & \mathbf{x}=(x_1, x_2, \cdots, x_N), \quad  \mathbf{y}=(y_1, y_2, \cdots, y_N),    \\
 & \mathbf{x}^2=(x_1^2, x_2^2, \cdots, x_N^2), \quad  |\mathbf {x}|=\sqrt{x_1^2+x_2^2+\cdots+x_N^2}.
 \end{split}
\end{equation}
and the constant vectors
\begin{equation}
\mathbf{a}=(a, a, \cdots, a), \quad \mathbf{0}= (0, 0, \cdots, 0).
\end{equation}
According to the Karlin-McGregor formula in the affine Weyl alcove of height $M$ (see  \cite{makoto2007}) and the definition of $N$ non-intersecting Bessel processes, we obtain
\begin{equation} \label{p-relation-qqm}
    \mathbb{P}(\max_{0 < t < 1} b_N(t) <M)=\lim_{\mathbf{x} \to \mathbf{a}, \mathbf{y} \to \mathbf{0}}\frac{q^M(\mathbf{x},\mathbf{y})}{q(\mathbf{x},\mathbf{y})},
\end{equation}
with
\begin{equation} \label{qm-def}
    q^M(\mathbf{x},\mathbf{y})
        =\det_{1 \le i, j \le N}[p^M(1,y_j|x_i)], \qquad q(\mathbf{x},\mathbf{y})=\det_{1 \le i, j \le N}[p(1,y_j|x_i)].
\end{equation}
Here, $p(t,y|x)$ is the transition probability defined in \eqref{p1} and \eqref{p11}, and $p^M(t,y|x)$ is the transition probability of the particle passes from $x$ to $y$ over the time interval $t$ with an absorbing wall at the position $M$.

To find $p^M(t,y|x)$, let us consider the following diffusion equation
\begin{equation}
    \begin{cases}
      \displaystyle u_t=\frac{1}{2} \left( u_{yy}+\frac{2 \alpha+1}{y}u_y \right),  \\
     \displaystyle\lim_{t \to 0}u(t,y)=\delta(y-x),  \\
     u(t,M)=u_{y}(t,0)=0;
    \end{cases}
\end{equation}
see similar equations in \cite{Abraham, borodin}.
Using the method of separation of variables, one gets the unique solution to the above equation
\begin{equation}
    u(t,y)=\sum^{\infty}_{n=1}\frac{1}{M^2J_{\alpha +1}^2(x_{n, \alpha})}\frac{1}{(xy)^{\alpha}}J_{\alpha}(\frac{x_{n,\alpha}}{M}x)J_{\alpha}(\frac{x_{n,\alpha}}{M}y)e^{-\frac{x^2_{n, \alpha}}{2M^2}t},
\end{equation}
which is the transition density with respect to the speed measure $m_{\alpha}(dy)=2y^{2 \alpha +1}dy$. Therefore, we obtain
\begin{equation}\label{p2}
    p^M(t,y|x)=\sum^{\infty}_{n=1}\frac{2}{M^2J_{\alpha +1}^2(x_{n, \alpha})}\frac{y^{\alpha +1}}{x^{\alpha}}J_{\alpha}(\frac{x_{n,\alpha}}{M}x)J_{\alpha}(\frac{x_{n,\alpha}}{M}y)e^{-\frac{x^2_{n, \alpha}}{2M^2}t}
\end{equation}
where $J_{\alpha}(z)$ is the Bessel function of the first kind
\begin{equation}\label{besselj}
    J_{\alpha}(z)=\sum^{\infty}_{k=0}(-1)^k \frac{(\frac{z}{2})^{2k+\alpha }}{k! \Gamma (k+\alpha +1)}
\end{equation}
 and $x_{n, \alpha}$'s are its zeros given in \eqref{bessel-zeros}.

\begin{remark}
When $\alpha = -\frac{1}{2}$, the zeroes in \eqref{bessel-zeros} are $x_{n,-\frac{1}{2}}=(n-\frac{1}{2})\pi$, $n \in \mathbb{N}$. Then, the  transition probability \eqref{p2} becomes
\begin{equation}\label{pp2}
    p^M(t, y|x)=\sum_{n=1}^{\infty} \frac{2}{M}\cos{\left(\frac{(n-\frac{1}{2})\pi}{M}x \right)}\cos{\left (\frac{(n-\frac{1}{2})\pi}{M}y \right)}e^{-\frac{(n-\frac{1}{2})^2\pi^2}{2M^2}t}.
\end{equation}
By the Poisson summation formula, we have
\begin{equation}
    p^M(t, y|x)=\sum^{\infty}_{n=-\infty}\frac{(-1)^n}{\sqrt{2 \pi t}}\left[e^{-\frac{(y-x-2nM)^2}{2t}}+e^{-\frac{(y+x+2nM)^2}{2t}}\right].
\end{equation}
It is easily seen from the above formula that $\lim\limits_{M \to \infty} p^M(t, y|x) = p(t, y|x)$. This, together with \eqref{p-relation-qqm}, gives us the desired result $\lim\limits_{M \to \infty}\mathbb{P}(\max\limits_{0 < t < 1} b_N(t) <M) = 1$.
\end{remark}

\subsection{Some properties for determinants}

We need some preliminary results for determinants.
\begin{lemma}\label{lemma1}
Let $f$ and $g$ be  functions with two independent variables, and $\mathbf{n}=(n_1, n_2, \cdots, n_N)$ be an $N$-dimensional vector. Then, we have
    \begin{equation} \label{one-two-det}
        \sum_{\mathbf{n} \in \mathbb{N}^N}\det_{1 \le i,j \le N}[f(x_i,n_j)g(y_j,n_j)]=\frac{1}{N!}\sum_{\mathbf{n} \in \mathbb{N}^N}\det_{1 \le i,j \le N}[f(x_i,n_j)]\det_{1 \le i,j \le N}[g(y_i,n_j)].
    \end{equation}
\end{lemma}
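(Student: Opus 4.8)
The plan is to prove the identity by expanding both sides into multilinear sums over index choices and matching them term by term, using antisymmetry to collapse the combinatorial overcount.

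First I would recall the Leibniz expansion of a determinant: for the left-hand side, write
\[
\det_{1\le i,j\le N}[f(x_i,n_j)g(y_j,n_j)]=\sum_{\sigma\in S_N}\sgn(\sigma)\prod_{i=1}^N f(x_i,n_{\sigma(i)})g(y_{\sigma(i)},n_{\sigma(i)}).
\]
Summing over $\mathbf n\in\mathbb N^N$ and relabelling the summation index $n_j$ according to $\sigma$, the $g$-factors $\prod_i g(y_{\sigma(i)},n_{\sigma(i)})$ become $\prod_j g(y_j,n_j)$, so they factor out of the permutation sum, leaving $\sum_\sigma \sgn(\sigma)\prod_i f(x_i,n_{\sigma(i)})$ times $\prod_j g(y_j,n_j)$; the point is that the $g$'s are "diagonal" in $n_j$ and so are invariant under the relabelling, whereas the $f$'s still carry the permutation. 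Thus the left side equals $\sum_{\mathbf n\in\mathbb N^N}\det_{1\le i,j\le N}[f(x_i,n_j)]\prod_{j=1}^N g(y_j,n_j)$.

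Next I would symmetrize. Replacing $\mathbf n$ by a permuted copy $(n_{\tau(1)},\dots,n_{\tau(N)})$ for each $\tau\in S_N$, the value of $\det[f(x_i,n_j)]$ picks up $\sgn(\tau)$, and $\prod_j g(y_j,n_j)$ becomes $\prod_j g(y_j,n_{\tau(j)})$. Averaging over the $N!$ choices of $\tau$ gives
\[
\sum_{\mathbf n\in\mathbb N^N}\det_{1\le i,j\le N}[f(x_i,n_j)]\prod_{j=1}^N g(y_j,n_j)
=\frac1{N!}\sum_{\mathbf n\in\mathbb N^N}\det_{1\le i,j\le N}[f(x_i,n_j)]\sum_{\tau\in S_N}\sgn(\tau)\prod_{j=1}^N g(y_j,n_{\tau(j)}),
\]
and the inner sum over $\tau$ is exactly $\det_{1\le i,j\le N}[g(y_i,n_j)]$, which yields the right-hand side of \eqref{one-two-det}. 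To make the relabellings rigorous one notes they are bijections of $\mathbb N^N$ onto itself, so the (absolutely convergent, in our applications) sums are unchanged; for the general statement one may simply regard the identity as a formal manipulation of finite sums when $\mathbb N$ is replaced by any finite index set and then pass to the limit, or assume absolute convergence throughout.

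The only mild obstacle is bookkeeping the index relabelling in the first step cleanly — making sure the $g$-factors genuinely decouple because each $g(y_{\sigma(i)},n_{\sigma(i)})$ depends on $i$ only through $\sigma(i)$, so reindexing $k=\sigma(i)$ turns $\prod_i g(y_{\sigma(i)},n_{\sigma(i)})$ into $\prod_k g(y_k,n_k)$ regardless of $\sigma$. Once that observation is in place the rest is the standard Andréief/Cauchy–Binet-type symmetrization, and no analytic subtlety arises beyond justifying the interchange of the (finite permutation) sums with the sum over $\mathbf n$.
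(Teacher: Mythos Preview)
Your proof is correct and follows essentially the same Andr\'eief-type route as the paper: Leibniz expansion plus symmetrization over relabellings of $\mathbf n$. The only organizational difference is that you first factor out the $g$-product (using $\prod_i g(y_{\sigma(i)},n_{\sigma(i)})=\prod_k g(y_k,n_k)$) and then symmetrize, whereas the paper symmetrizes first and then disentangles the two permutations via the substitution $\rho=\tau\circ\sigma^{-1}$; the underlying argument is the same.
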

\begin{proof}
From the Leibniz formula for the determinant, we have
\begin{equation*}
\det_{1 \le i,j \le N}[f(x_i,n_j)g(y_j,n_j)] =
            \sum_{\sigma} \left( \sgn \sigma\prod_{j=1}^N f(x_{\sigma (j)},n_j)g(y_j,n_j) \right),
\end{equation*}
where $\sigma$ is a permutations  of the set $\{1, 2, \cdots, N\}$. Substituting the above formula into the left-hand side of \eqref{one-two-det}, we get
\begin{equation*}
\sum_{\mathbf{n} \in \mathbb{N}^N}\det_{1 \le i,j \le N}[f(x_i,n_j)g(y_j,n_j)] = \sum_{\mathbf{n} \in \mathbb{N}^N}\sum_{\sigma} \sgn \sigma\prod_{j=1}^N f(x_{\sigma (j)},n_j)g(y_j,n_j).
\end{equation*}
Let us change the index $n_j$ to $n_{\tau(j)}$, where $\tau $ is an arbitrary permutation of the set $\{1, 2, \cdots, N\}$. Since the summation is taken for all $n \in \mathbb{N}^N$, we have
\begin{equation*}
\sum_{\mathbf{n} \in \mathbb{N}^N}\det_{1 \le i,j \le N}[f(x_i,n_j)g(y_j,n_j)] = \frac{1}{N!}\sum_{\mathbf{n} \in \mathbb{N}^N}\sum_{\sigma}\sum_{\tau} \sgn \sigma\prod_{j=1}^N f(x_{\sigma (j)},n_{\tau(j)}) \prod_{j=1}^N  g(y_j,n_{\tau(j)}).
\end{equation*}
Consider a new permutation $\rho$ defined as $ \rho: = \tau \circ  \sigma^{-1}$. Denote $l = \sigma(j)$, then $\tau(j) = \tau (\sigma^{-1}(l)) = \rho(l)$. We change the index in the first product in the  above formula and obtain
\begin{equation*}
\sum_{\mathbf{n} \in \mathbb{N}^N}\det_{1 \le i,j \le N}[f(x_i,n_j)g(y_j,n_j)] =\frac{1}{N!}\sum_{\mathbf{n} \in \mathbb{N}^N}\sum_{\tau}\sum_{\rho} \sgn \tau \sgn \rho \prod_{l=1}^N f(x_l,n_{\rho(l)}) \prod_{j=1}^N   g(y_j,n_{\tau(j)}).
\end{equation*}
Separating the summations about $\tau$ and $\rho$, we get
    \begin{equation*}
        \begin{split}
            \sum_{\mathbf{n} \in \mathbb{N}^N}\det_{1 \le i,j \le N}[f(x_i,n_j)g(y_j,n_j)]
            &=\frac{1}{N!}\sum_{\mathbf{n} \in \mathbb{N}^N}\sum_{\rho}  \sgn \rho\prod_{l=1}^N f(x_l,n_{\rho(l)})\sum_{\tau}\sgn \tau \prod_{j=1}^Ng(y_j,n_{\tau(j)}).
        \end{split}
    \end{equation*}
    With the Leibniz formula for the determinants again, we obtain \eqref{one-two-det} from the above formula.

    This completes the proof of the lemma.
\end{proof}
\begin{remark}
One may take matrix transpose in \eqref{one-two-det} and rewrite the formula as
\begin{equation} \label{one-two-det-tran}
        \sum_{\mathbf{n} \in \mathbb{N}^N}\det_{1 \le i,j \le N}[f(x_j,n_i)g(y_i,n_i)]=\frac{1}{N!}\sum_{\mathbf{n} \in \mathbb{N}^N}\det_{1 \le i,j \le N}[f(x_j,n_i)]\det_{1 \le i,j \le N}[g(y_j,n_i)].
\end{equation}
Moreover, from the proof, it is easy to verify that Lemma \ref{lemma1} still holds if an additional factor $\prod_{i=1}^{N} h(n_i)$ appears in the summation. More precisely, we have
\begin{equation} \label{one-two-det-2}
    \begin{split}
        &\sum_{\mathbf{n} \in \mathbb{N}^N}\left (\prod_{i=1}^{N} h(n_i)\right )\det_{1 \le i,j \le N}[f(x_i,n_j)g(y_j,n_j)]\\
        & \qquad \qquad =\frac{1}{N!}\sum_{\mathbf{n} \in \mathbb{N}^N}\left (\prod_{i=1}^{N} h(n_i)\right )\det_{1 \le i,j \le N}[f(x_i,n_j)]\det_{1 \le i,j \le N}[g(y_i,n_j)].
    \end{split}
\end{equation}
\end{remark}

To study the limits of determinants in \eqref{p-relation-qqm}, we will prove one more lemma below. First, let us consider the following Schur function
\begin{equation} \label{schur-def}
    s_{\pmb{\mu}}(\mathbf{x})=\frac{\det \limits_{1 \le j,k \le N}[x_j^{{\mu}_k+N-k}]}{\det \limits_{1 \le j,k \le N}[x_j^{N-k}]}
\end{equation}
with $\mathbf{x}$ being the vector defined in \eqref{vector-notations}; see \cite{macdonald}.
Here $\pmb{\mu}=(\mu _1, \mu _2, \cdots, \mu _N)$ is a sequence of non-negative integers in decreasing order $\mu _1 \ge \mu _2 \ge \cdots \ge \mu _N$, which is called a partition. Obviously, both the numerator and denominator are Vandermonde determinants. For example, we have
\begin{equation}
    \det \limits_{1 \le j,k \le N}[x_j^{N-k}]=\prod_{1 \le j<k \le N}(x_j-x_k)=(-1)^{\frac{N(N-1)}{2}}\Delta (\mathbf x).
\end{equation}
Note that $s_{\pmb{\mu}}(\mathbf{0})=0$ unless $\pmb{\mu} =\mathbf 0$. Moreover, we have the following property
\begin{equation} \label{schur-limit}
      s_{\mathbf{0}}(\mathbf{x}) \equiv 1 \qquad \textrm{and} \qquad s_{\pmb{\mu}}(\mathbf{x}) = O(|\mathbf{x}|) \ \ \textrm{as } \mathbf{x} \to \mathbf{0} \ \ \textrm{if } \pmb{\mu} \ne \mathbf{0}.
\end{equation}
Then, we have the following result.

\begin{lemma} \label{lemma2}
    Let  $f$ be a smooth function, then we have
    \begin{equation} \label{fxy-limit}
        \det_{1 \le i,j \le N}[f(x_i y_j)]=\left(\prod_{j=1}^N\frac{f^{(j-1)}(ay_{N+1-j})}{\Gamma(j)} \right)\Delta (\mathbf{x-a}) \Delta (\mathbf y)\{1+O(|\mathbf{x-a}|)\}
    \end{equation}
    as $\mathbf{x} \to \mathbf{a}$. 
\end{lemma}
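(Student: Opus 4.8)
The plan is to expand $f(x_i y_j)$ as a Taylor series in the variable $x_i$ around the point $a$, and to recognize the resulting determinant as a combination of Vandermonde-type (Schur) determinants. Writing $x_i = a + (x_i - a)$ and using smoothness of $f$, one has
\begin{equation*}
f(x_i y_j) = \sum_{k=0}^{\infty} \frac{f^{(k)}(a y_j)}{k!}\, y_j^k (x_i - a)^k .
\end{equation*}
Substituting this into the determinant and using multilinearity in the rows (indexed by $i$), the determinant becomes a sum over multi-indices $\mathbf{k} = (k_1,\dots,k_N)$ of a product $\prod_j \frac{f^{(k_j)}(a y_j)}{k_j!} y_j^{k_j}$ times $\det_{1\le i,j\le N}\big[(x_i-a)^{k_j}\big]$. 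The latter determinant vanishes unless the $k_j$ are pairwise distinct, and for the leading behavior as $\mathbf{x}\to\mathbf{a}$ the dominant contribution comes from the choice of exponents that minimizes $\sum_j k_j$ among sets of $N$ distinct non-negative integers, namely $\{0,1,\dots,N-1\}$ — any other admissible multi-index contributes a higher power of $|\mathbf{x}-\mathbf{a}|$ and is absorbed into the error term $\{1+O(|\mathbf{x}-\mathbf{a}|)\}$.

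For the leading multi-index, $\det_{1\le i,j\le N}\big[(x_i-a)^{k_j}\big]$ with $\{k_j\}$ a permutation of $\{0,\dots,N-1\}$ equals $\pm \Delta(\mathbf{x}-\mathbf{a})$, the sign being the sign of the permutation. Summing over all orderings of the exponents then reconstitutes a determinant in $j$: the factor $\prod_j \frac{f^{(k_j)}(ay_j)}{k_j!} y_j^{k_j}$ combined with the permutation signs yields $\Delta(\mathbf{x}-\mathbf{a}) \cdot \det_{1\le i,j\le N}\big[\frac{f^{(i-1)}(ay_j)}{(i-1)!}\, y_j^{i-1}\big]$. Pulling the constants $\frac{1}{(i-1)!} = \frac{1}{\Gamma(i)}$ out of row $i$ and the factor $y_j^{i-1}$ out of... no — here one instead observes that $\det_{1\le i,j\le N}\big[f^{(i-1)}(ay_j)\, y_j^{i-1}\big]$ is, up to the Vandermonde-type structure, of the stated form; more precisely, one treats $f^{(i-1)}$ itself via a further Taylor expansion or, more cleanly, uses the confluent-Vandermonde / Schur-function identity. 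The cleanest route is: apply Lemma-type row reductions using \eqref{schur-def}–\eqref{schur-limit}, writing $\det[f(x_iy_j)] = \det[x_j^{N-k}]\cdot(\text{sum of Schur functions } s_{\pmb{\mu}}(\mathbf{x})\times \text{minors in }\mathbf{y})$, and invoking \eqref{schur-limit} so that only $s_{\mathbf{0}}$ and the minimal partitions survive to leading order. This identifies the coefficient of $\Delta(\mathbf{x}-\mathbf{a})\Delta(\mathbf{y})$ as $\prod_{j=1}^N \frac{f^{(j-1)}(ay_{N+1-j})}{\Gamma(j)}$, where the reversal $y_{N+1-j}$ appears because the Vandermonde $\Delta(\mathbf{y}) = \prod_{j<k}(y_j-y_k)$ pairs the highest derivative with the column ordering in the opposite sense.

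The main obstacle I expect is bookkeeping of signs and of the index reversal: tracking which permutation sign survives, confirming that the surviving determinant in $\mathbf{y}$ really factors as $\big(\prod_j f^{(j-1)}(ay_{N+1-j})/\Gamma(j)\big)\Delta(\mathbf{y})$ rather than some other arrangement, and verifying that all non-minimal multi-indices genuinely contribute $O(|\mathbf{x}-\mathbf{a}|)$ relative to the leading term (this needs that $\Delta(\mathbf{x}-\mathbf{a})\ne 0$ generically and that the next-order exponent sets raise the total degree by exactly one). A careful way to handle the $\mathbf{y}$-determinant is to note that $\det_{1\le i,j\le N}\big[g_i(y_j)\big]$ with $g_i(y)=f^{(i-1)}(ay)y^{i-1}/\Gamma(i)$ has leading term (as all $y_j$ are generic) given by expanding each $g_i$ in powers of $y_j$; but since we only need the statement as written — a factorization with an explicit scalar prefactor times $\Delta(\mathbf{x}-\mathbf{a})\Delta(\mathbf{y})$ up to relative error — it suffices to extract the coefficient of the monomial $\prod_j (x_j-a)^{j-1}$ (times the appropriately ordered $\mathbf{y}$-monomial) on both sides and match. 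I would therefore organize the proof as: (i) Taylor expand and apply row-multilinearity; (ii) classify multi-indices by $\sum k_j$ and discard the non-minimal ones into the error; (iii) resum the minimal term into a product of two Vandermonde determinants with the stated prefactor, fixing signs via the permutation-sign bookkeeping and \eqref{schur-limit}.
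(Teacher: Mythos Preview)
Your outline is essentially the paper's own argument: Taylor expand $f(x_iy_j)$ about $x_i=a$, use column multilinearity to get a sum over multi-indices $\mathbf{k}$, and then isolate the minimal exponent set $\{0,1,\dots,N-1\}$ via the Schur-function estimate \eqref{schur-limit}. You also correctly extract the leading term as
\[
\Delta(\mathbf{x}-\mathbf{a})\;\det_{1\le i,j\le N}\!\left[\frac{f^{(i-1)}(ay_j)}{(i-1)!}\,y_j^{\,i-1}\right]\{1+O(|\mathbf{x}-\mathbf{a}|)\}.
\]

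The gap is exactly the step you flag as ``the main obstacle'': passing from this determinant in $\mathbf{y}$ to the product $\prod_{j}\frac{f^{(j-1)}(ay_{N+1-j})}{\Gamma(j)}\,\Delta(\mathbf{y})$. That passage cannot be completed in general, because the identity \eqref{fxy-limit} is false as stated. For $N=2$, $a=1$, $f(t)=1+t^2$ one finds that the left side has leading coefficient $2(y_2^2-y_1^2)$ in $(x_2-x_1)$, while the right side gives $2y_1(1+y_2^2)(y_2-y_1)$; these disagree (try $y_1=1$, $y_2=2$). So no amount of sign or index bookkeeping will close your argument.

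The paper's own proof commits the same error one step earlier: it passes from
$\sum_{\mathbf{k}}\det\bigl[\tfrac{f^{(k_j)}(ay_j)}{k_j!}(x_i-a)^{k_j}y_j^{k_j}\bigr]$
to
$\tfrac{1}{N!}\sum_{\mathbf{k}}\prod_j\tfrac{f^{(k_j)}(ay_j)}{k_j!}\det[(x_i-a)^{k_j}]\det[y_i^{k_j}]$
by invoking Lemma~\ref{lemma1}. But the factor $\tfrac{f^{(k_j)}(ay_j)}{k_j!}$ depends on \emph{both} the column index $j$ (through $y_j$) and the summation index $k_j$, so it is neither a column factor that can be pulled out nor an $h(n_j)$-type factor as in \eqref{one-two-det-2}; Lemma~\ref{lemma1} simply does not apply. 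What survives your analysis (and what the downstream applications in Section~\ref{sec:derivation} actually need, after summing over $\mathbf{n}$ with a symmetric weight) is the determinantal leading term displayed above, not the product form claimed in \eqref{fxy-limit}.
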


\begin{proof}
   We expand $f(x_i y_j)$ at $x_i=a$ into  a Taylor series
   \begin{equation*}
    f(x_i y_j) = \sum_{k=0}^{\infty}\frac{f^{(k)}(ay_{j})}{k!}(x_i-a)^k y_j^k.
   \end{equation*}
    With \eqref{one-two-det} and the above formula, we have
    \begin{equation*}
        \begin{split}
            \det_{1 \le i,j \le N}[f(x_i y_j)]
            &=\sum_{\mathbf k \in \mathbb{N}_0^N}\det_{1 \le i,j \le N} \left[\frac{f^{(k_j)}(ay_{j})}{k_j!}(x_i-a)^{k_j} y_j^{k_j} \right] \\
            &=\frac{1}{N!}\sum_{\mathbf k \in \mathbb{N}_0^N}\prod_{j=1}^N\frac{f^{(k_j)}(ay_{j})}{k_j!}\det_{1 \le i,j \le N}[(x_i-a)^{k_j}]\det_{1 \le i,j \le N}[y_i^{k_j}],
            \end{split}
    \end{equation*}
    where $\mathbb{N}_0=\{0, 1, 2, \cdots\}$.
    We order the index $\mathbf k = (k_1, k_2, \cdots, k_N)$ such that $\{ k_j \}_{j=1}^N$ is a decreasing sequence. This gives us
    \begin{equation*}
            \det_{1 \le i,j \le N}[f(x_i y_j)]
              =\sum_{0 \le k_N < k_{N-1} <\cdots <k_1}\prod_{j=1}^N\frac{f^{(k_j)}(ay_{j})}{k_j!}\det_{1 \le i,j \le N}[(x_i-a)^{k_j}]\det_{1 \le i,j \le N}[y_i^{k_j}].
    \end{equation*}
    Changing the index from $\mathbf k$ to $\pmb\mu$ with $\mu_j=k_j-N+j$, we get from the definition of the Schur function in \eqref{schur-def} that
    \begin{equation*}
        \begin{split}
            \det_{1 \le i,j \le N}[f(x_i y_j)] &=\sum_{\pmb{\mu}}\prod_{j=1}^N\frac{f^{(\mu_j+N-j)}(ay_{j})}{(\mu_j+N-j)!}s_{\pmb{\mu}}(\mathbf x-\mathbf a)\Delta(\mathbf{x-a})s_{\pmb{\mu}}(\mathbf y)\Delta(\mathbf y). 
        \end{split}
    \end{equation*}
    Finally, using the asymptotics of the Schur function in \eqref{schur-limit}, we obtain \eqref{fxy-limit}.
\end{proof}

\section{Derivation of the distribution function}\label{sec:derivation}

In this section, we will prove our main theorem by  studying   some properties of $ q(\mathbf{x}, \mathbf{y})$ and $ q^M(\mathbf{x}, \mathbf{y})$ defined in \eqref{qm-def}.

\subsection{Asymptotics of $q(\mathbf{x}, \mathbf{y})$ and $q^M(\mathbf{x}, \mathbf{y})$}

From the transition probabilities $p(t,y|x)$ in \eqref{p1} and $p^M(t,y|x)$ in \eqref{p2}, we have
\begin{eqnarray}
  q(\mathbf{x}, \mathbf{y})
        &=&\det_{1 \le i, j \le N} \left[\frac{y_j^{\alpha+1}}{x_i^{\alpha}}e^{-\frac{x_i^2+y_j^2}{2}}I_{\alpha}(x_i y_j) \right], \label{q-det-formula} \\
        q^M(\mathbf{x}, \mathbf{y})&=&\det_{1 \le i, j \le N}\left[\sum^{\infty}_{n=1}\frac{2}{M^2J_{\alpha +1}^2(x_{n, \alpha})}\frac{y_j^{\alpha +1}}{x_i^{\alpha}}J_{\alpha}(\frac{x_{n,\alpha}}{M}x_i)J_{\alpha}(\frac{x_{n,\alpha}}{M}y_j)e^{-\frac{x^2_{n, \alpha}}{2M^2}}\right]. \label{qm-det-formula}
\end{eqnarray}
To obtain the probability of the maximal height $\D \mathbb{P}(\max_{0 < t < 1} b_N(t) <M)$ in \eqref{p-relation-qqm}, we need the asymptotics of the above two functions as $\mathbf{x} \to \mathbf{a}$ and $\mathbf{y} \to \mathbf{0}$.

\begin{prop}
 For $a>0, N \ge 1$, $M \ge a$, as $\mathbf{x} \to \mathbf{a}$ and $\mathbf{y} \to \mathbf{0}$, we have
\begin{equation} \label{q-limit-1}
\begin{split}
    q(\mathbf{x}, \mathbf{y}) &
    = \frac{a^{\frac{N(N-1)}{2}} e^{-\frac{a^2 N}{2} } }{2^{\frac{N(N-1+2\alpha)}{2}}} \left (\prod_{k=1}^N\frac{y_k^{2 \alpha +1} }{\Gamma(k)\Gamma(\alpha+k)} \right ) \\
     & \qquad \qquad \times\Delta(\mathbf{x-a})\Delta(\mathbf{y}^2)\times  \{1+O(|\mathbf{x-a}|)+O(|\mathbf{y}|)\}
    \end{split}
\end{equation}
and
\begin{equation} \label{qm-limit-1}
    \begin{split}
        q^M(\mathbf{x}, \mathbf{y})
        &=\frac{1}{N!2^{N(N+\alpha -2)}M^{\frac{N(3N+2\alpha+1)}{2}}}\left(\prod_{k=1}^N \frac{y_k^{2 \alpha +1}}{x_k^{\alpha}(\Gamma (k))^2\Gamma (k+\alpha)}\right)\Delta(\mathbf {x-a})\Delta(\mathbf {y^2})\\
        & \quad \times \left (\sum_{\mathbf n \in \mathbf N^N} \left (\prod_{s=1}^N \frac{(-1)^{s-1}x_{n_s, \alpha}^{ \alpha}J_{\alpha}^{(s-1)}(\frac{a}{M}x_{n_s,\alpha})e^{-\frac{x_{n_s, \alpha}^2}{2M^2}}}{J_{\alpha +1}^2(x_{n_s, \alpha})} \right ) \Delta (\mathbf{x_{n, \alpha}})\Delta(\mathbf{x_{n, \alpha}^2}) \right )\\
        & \quad \times \{1+O(|\mathbf {x-a}|)+O(|\mathbf y|)\},
    \end{split}
\end{equation}
where $\Delta(\mathbf{x})$ is defined in \eqref{vandermond} and
\begin{equation}
  \mathbf{x_{n, \alpha}} = (x_{n_1, \alpha}, x_{n_2, \alpha}, \cdots , x_{n_N, \alpha}) .
\end{equation}
When $a=0$, we have
\begin{equation} \label{q-limit-2}
    q(\mathbf{x}, \mathbf{y})= \frac{1}{2^{N(N-1+\alpha)}} \left (\prod_{k=1}^N\frac{y_k^{2 \alpha +1}  }{\Gamma(k)\Gamma(\alpha+k)} \right )\Delta(\mathbf{x}^2)\Delta(\mathbf{y}^2)\times \{1+O(|\mathbf{x}|)+O(|\mathbf{y}|)\}
\end{equation}
and
\begin{equation} \label{qm-limit-2}
    \begin{split}
        q^M(\mathbf{x}, \mathbf{y})
        &=\frac{1}{N! 2^{N(2N+2\alpha -3)} M^{2N(N+\alpha)}} \left(\prod_{k=1}^N \frac{y_k^{2 \alpha +1}}{(\Gamma (k))^2 \Gamma (k+\alpha)^2}\right)\Delta(\mathbf {x}^2)\Delta(\mathbf {y^2})\\
        &\quad \times \left (\sum_{\mathbf n \in \mathbf N^N} \left (\prod_{s=1}^N \frac{x_{n_s, \alpha}^{2 \alpha}e^{-\frac{x_{n_s, \alpha}^2}{2M^2}}}{J_{\alpha +1}^2(x_{n_s, \alpha})} \right ) \left (\Delta(\mathbf{x_{n, \alpha}^2})\right )^2 \right ) \times \{1+O(|\mathbf {x}|)+O(|\mathbf y|)\}.
    \end{split}
\end{equation}
\end{prop}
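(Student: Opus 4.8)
The plan is to compute the small-argument asymptotics of the two determinants \eqref{q-det-formula} and \eqref{qm-det-formula} by expanding the Bessel functions in their respective series and applying Lemmas \ref{lemma1} and \ref{lemma2}. For $q(\mathbf{x},\mathbf{y})$ I would first pull the factors $y_j^{\alpha+1} x_i^{-\alpha} e^{-(x_i^2+y_j^2)/2}$ out of the determinant; the exponential factors $e^{-y_j^2/2}$ and $e^{-x_i^2/2}$ contribute only $1+O(|\mathbf{y}|)$ and $1+O(|\mathbf{x-a}|)$ and can be absorbed into the error term at the end. What remains is $\det_{1\le i,j\le N}[I_\alpha(x_i y_j)]$. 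Writing $I_\alpha(z) = (z/2)^\alpha \sum_{k\ge0} (z/2)^{2k}/(k!\,\Gamma(k+\alpha+1)) =: (z/2)^\alpha g(z^2)$ with $g$ smooth and $g(0) = 1/\Gamma(\alpha+1)$, I factor $(x_i y_j/2)^\alpha$ out of row $i$ and column $j$, leaving $\det[g(x_i^2 y_j^2)]$. Now apply Lemma \ref{lemma2} with $f = g$ and the variables $x_i^2, y_j^2$ (so $\mathbf{x}\to\mathbf{a}$ becomes $\mathbf{x}^2\to\mathbf{a}^2$, and the Vandermonde $\Delta(\mathbf{x}^2-\mathbf{a}^2)$ is $\Delta(\mathbf{x-a})$ times a nonvanishing factor $\prod_{i<j}(x_i+a)$ that tends to $(2a)^{N(N-1)/2}$): this yields $\det[g(x_i^2 y_j^2)] = \bigl(\prod_{j=1}^N g^{(j-1)}(a^2 y_{N+1-j}^2)/\Gamma(j)\bigr)\Delta(\mathbf{x}^2-\mathbf{a}^2)\Delta(\mathbf{y}^2)\{1+O(|\mathbf{x-a}|)\}$, and since $y\to 0$ we may replace $g^{(j-1)}(a^2 y_{N+1-j}^2)$ by $g^{(j-1)}(0) = 1/\Gamma(\alpha+j)$ at the cost of $1+O(|\mathbf{y}|)$. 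Collecting all the prefactors $\prod y_j^{\alpha+1}\cdot\prod(x_iy_j/2)^\alpha\cdot\prod x_i^{-\alpha} = 2^{-N\alpha}\prod y_j^{2\alpha+1}$ and the $(2a)^{N(N-1)/2}$ from the Vandermonde, together with $e^{-a^2N/2}$, gives exactly \eqref{q-limit-1}. The case $a=0$ in \eqref{q-limit-2} is the same computation but simpler: one does not Taylor-expand in $x_i-a$, instead $g(x_i^2 y_j^2) = \sum_k g_k x_i^{2k} y_j^{2k}$ directly, and Lemma \ref{lemma1} (in the form \eqref{one-two-det-2}) together with the Schur-function ordering gives the leading term with the two squared Vandermondes $\Delta(\mathbf{x}^2)\Delta(\mathbf{y}^2)$; the leading partition $\pmb\mu=\mathbf 0$ contributes $\prod_k g_k$-type constants which again sort out to $\prod 1/(\Gamma(k)\Gamma(\alpha+k))$.

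For $q^M(\mathbf{x},\mathbf{y})$ the structure is analogous but the inner sum over $n$ must be kept. I pull $y_j^{\alpha+1}x_i^{-\alpha}$ out and am left with the determinant of $\sum_n c_n J_\alpha(\tfrac{x_{n,\alpha}}{M}x_i)J_\alpha(\tfrac{x_{n,\alpha}}{M}y_j)$ where $c_n = \tfrac{2}{M^2 J_{\alpha+1}^2(x_{n,\alpha})}e^{-x_{n,\alpha}^2/(2M^2)}$. Each entry is a sum over $n$, so by multilinearity of the determinant in its columns the whole determinant becomes $\sum_{\mathbf n\in\mathbb N^N}\prod_{j} c_{n_j}\,\det_{1\le i,j\le N}[J_\alpha(\tfrac{x_{n_j,\alpha}}{M}x_i)J_\alpha(\tfrac{x_{n_j,\alpha}}{M}y_j)]$; now Lemma \ref{lemma1} splits each summand into $\tfrac1{N!}\det_{i,j}[J_\alpha(\tfrac{x_{n_j,\alpha}}{M}x_i)]\det_{i,j}[J_\alpha(\tfrac{x_{n_j,\alpha}}{M}y_i)]$. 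The $y$-determinant is handled by the small-argument expansion $J_\alpha(z)\sim (z/2)^\alpha/\Gamma(\alpha+1)$ and the Schur-function/Vandermonde argument from the proof of Lemma \ref{lemma2} applied columnwise: its leading behaviour as $\mathbf y\to\mathbf 0$ is $\bigl(\prod_k \tfrac{(x_{n_k,\alpha}/2M)^{\cdots}}{\cdots}\bigr)\Delta(\mathbf{x_{n,\alpha}^2})\Delta(\mathbf y^2)\{1+O(|\mathbf y|)\}$ — more precisely the same computation that produced $\prod 1/\Gamma(\alpha+j)$ and $\Delta(\mathbf y^2)$ before, now also producing a factor $\prod_k x_{n_k,\alpha}^{2(j-1)+\alpha}$ that, after symmetrizing, builds the Vandermonde $\Delta(\mathbf{x_{n,\alpha}^2})$ and part of the $\prod x_{n_s,\alpha}^\alpha$. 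The $x$-determinant is handled by Lemma \ref{lemma2}: $\det_{i,j}[J_\alpha(\tfrac{x_{n_j,\alpha}}{M}x_i)] \to \bigl(\prod_{s=1}^N \tfrac{(x_{n_{N+1-s},\alpha}/M)^{s-1}J_\alpha^{(s-1)}(\tfrac{a}{M}x_{n_{N+1-s},\alpha})}{\Gamma(s)}\bigr)\Delta(\mathbf{x-a})\Delta(\mathbf{x_{n,\alpha}})\{1+O(|\mathbf{x-a}|)\}$ — here $f(t) = J_\alpha(\tfrac{x_{n_j,\alpha}}{M}t)$ has $j$-dependent scaling, but one first factors $J_\alpha(\tfrac{x_{n_j,\alpha}}{M}x_i) = (\tfrac{x_{n_j,\alpha}}{M})^0\cdot(\text{func of }x_i)$ and after the Taylor expansion in $x_i-a$ the $k_j$-th power of $\tfrac{x_{n_j,\alpha}}{M}$ appears, so upon the $\pmb\mu$-relabelling the derivative $J_\alpha^{(s-1)}$ is evaluated with the extra chain-rule factor $(\tfrac{x_{n_{N+1-s},\alpha}}{M})^{s-1}$; symmetrizing over $\mathbf n$ converts $\prod x_{n_{N+1-s},\alpha}^{s-1}$ into (up to sign $(-1)^{N(N-1)/2}$ absorbed as the $(-1)^{s-1}$ in \eqref{qm-limit-1}) the Vandermonde $\Delta(\mathbf{x_{n,\alpha}})$. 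Multiplying the two leading terms, reinserting $\prod c_{n_j}$, and collecting all powers of $M$, of $2$, and of $x_{n_s,\alpha}$ then reproduces \eqref{qm-limit-1}; the $a=0$ specialization \eqref{qm-limit-2} follows by the same route with $J_\alpha(\tfrac{a}{M}x_{n_s,\alpha})$ replaced by its leading term $\sim (x_{n_s,\alpha}/(2M))^\alpha/\Gamma(\alpha+1)$ in both the $x$- and $y$-determinants, which upgrades $\Delta(\mathbf{x_{n,\alpha}})\Delta(\mathbf{x_{n,\alpha}^2})$ to $(\Delta(\mathbf{x_{n,\alpha}^2}))^2$ and the single $\prod1/\Gamma(\alpha+k)$ to $\prod1/\Gamma(\alpha+k)^2$.

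The main technical obstacle is bookkeeping: tracking exactly which permutation/relabelling ($\mathbf k\mapsto\pmb\mu$, the symmetrization over $\mathbf n$) produces which Vandermonde and which sign, and verifying that all the scattered powers of $2$, $M$, $a$, and $\Gamma$-factors coalesce into the stated constants $c_N(\alpha)$, $\widetilde c_N(\alpha)$ after one divides $q^M$ by $q$ in \eqref{p-relation-qqm}. A subtler point is \emph{uniformity and convergence}: in the $x$-determinant step for $q^M$ the entries are infinite sums over $n$, and Lemma \ref{lemma2}'s Taylor expansion must be justified term by term and then the $O(|\mathbf{x-a}|)$ error controlled uniformly in $\mathbf n$ so that it survives the (absolutely convergent, thanks to the Gaussian factor $e^{-x_{n,\alpha}^2/2M^2}$ and $J_{\alpha+1}^{-2}(x_{n,\alpha}) = O(x_{n,\alpha})$) summation over $\mathbf n\in\mathbb N^N$; I would address this by noting the exponential decay dominates the polynomial growth of $J_\alpha^{(s-1)}(\tfrac a M x_{n,\alpha})$ and its Taylor coefficients uniformly for $M\ge a$, so the interchange of limit and sum and the pull-out of the uniform error are legitimate. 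Once the Proposition is in hand, Theorem \ref{the:theorem} is immediate: divide \eqref{qm-limit-1} by \eqref{q-limit-1}, the factors $\prod y_k^{2\alpha+1}$, the $\Delta(\mathbf{x-a})$, $\Delta(\mathbf{y}^2)$ and the error terms all cancel in the limit, the surviving sum over $\mathbf n$ is recognized as $N!\,\det_{1\le i,j\le N}\bigl[\sum_{n}\tfrac{(-1)^{i-1}x_{n,\alpha}^{i+2j+\alpha-3}J_\alpha^{(i-1)}(\tfrac a M x_{n,\alpha})e^{-x_{n,\alpha}^2/2M^2}}{J_{\alpha+1}^2(x_{n,\alpha})}\bigr]$ via the reverse of the Lemma \ref{lemma1}/Cauchy–Binet manipulation (each $\Delta(\mathbf{x_{n,\alpha}})\Delta(\mathbf{x_{n,\alpha}^2})$ pairs the two index sets back into a single determinant of a sum), and the constant collapses to $c_N(\alpha)M^{-N(3N+2\alpha+1)/2}$.
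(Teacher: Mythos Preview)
Your proposal is correct and follows essentially the same route as the paper: factor out the scalar prefactors, use Lemma~\ref{lemma1} to split the $q^M$ determinant into a sum over $\mathbf n$ of a product of two determinants, apply Lemma~\ref{lemma2} (Taylor expansion at $x_i=a$) to the $x$-determinant, and use the series expansion of $I_\alpha$ / $J_\alpha$ together with the Schur-function reindexing for the $y$-determinant and for the $a=0$ cases. The only cosmetic difference is that for $q(\mathbf x,\mathbf y)$ you invoke Lemma~\ref{lemma2} directly on $\det[g(x_i^2 y_j^2)]$ whereas the paper redoes the Schur expansion explicitly there; since Lemma~\ref{lemma2} is itself proved by that expansion, the two are the same argument (and your observation $\Delta(\mathbf x^2-a^2)=\Delta(\mathbf x^2)\sim(2a)^{N(N-1)/2}\Delta(\mathbf{x-a})$ matches the paper's final step).
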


\begin{proof}
    By the multilinearity of the determinant, we have from \eqref{q-det-formula}:
    \begin{equation} \label{q-det-formula-2}
        q(\mathbf{x}, \mathbf{y})
       =\left(\prod_{k=1}^N\frac{y_k^{\alpha+1}}{x_k^{\alpha}}e^{-\frac{x_k^2+y_k^2}{2}} \right)\det_{1 \le i, j \le N}[I_{\alpha}(x_i y_j)].
    \end{equation}
    With the definition of $I_{\alpha}(z)$ in \eqref{defI}, we get
    \begin{equation*}
    \begin{split}
        \det_{1 \le i, j \le N}[I_{\alpha}(x_i y_j)] & =\det_{1 \le i, j \le N}\left [\sum_{k=0}^{\infty} \left (\frac{1}{2} \right )^{2k+\alpha}\frac{(x_i y_j)^{2k+\alpha}}{k! \, \Gamma (k+\alpha+1)}\right ] \\ &
         = \frac{\prod \limits_{k=1}^N(x_k y_k)^{\alpha}  }{2^{\alpha N} }   \det_{1 \le i, j \le N}\left [\sum_{k=0}^{\infty} \left (\frac{1}{2} \right )^{2k}\frac{(x_i y_j)^{2k}}{k! \, \Gamma (k+\alpha+1)}\right ] .
        \end{split}
    \end{equation*}
    Moreover, from Lemma \ref{lemma1}, we obtain the following expression
    \begin{equation*}
            \det_{1 \le i, j \le N}[I_{\alpha}(x_i y_j)]=
             \frac{ \prod \limits_{k=1}^N(x_k y_k)^{\alpha}  }{2^{\alpha N} N!} \sum_{\mathbf k \in \mathbb{N}_0^N}\frac{ 2^{ -\sum \limits_{j=1}^N 2 k_j}}{\prod\limits_{j=1}^N k_j! \, \Gamma(k_j+\alpha+1)} \det_{1 \le i,j \le N}[x_i^{2k_j}]\det_{1 \le i,j \le N}[y_i^{2k_j}] .
    \end{equation*}
    Since $\displaystyle \det_{1 \le i,j \le N}[x_i^{2k_j}]=0$ and $\displaystyle \det_{1 \le i,j \le N}[y_i^{2k_j}]=0$ for any $k_i = k_j$, we may order the index $\mathbf{k}=(k_1, k_2, \cdots, k_N)$ such that $k_1 > k_2 > \cdots >k_N$.
    Moreover, set $\mu _j= k_j-N+j$. Obviously, we have $\sum\limits_{j=1}^N k_j=\sum\limits_{j=1}^N \mu_j+N(N-1)/2$. Replacing the indexes  $k_j$ by $\mu_j$ in the above formula, we obtain from the above formula and the definition of the Schur function in \eqref{schur-def}:
    \begin{equation}  \label{I-alpha-det}
            \det_{1 \le i, j \le N}[I_{\alpha}(x_i y_j)]= \frac{ \prod \limits_{k=1}^N(x_k y_k)^{\alpha}  }{2^{\alpha N} }   \Delta(\mathbf{x}^2)\Delta(\mathbf{y}^2)\sum_{\pmb{\mu}}\frac{ 2^{ -\sum\limits_{j=1}^N 2\mu_j-N(N-1)}s_{\pmb{\mu}}(\mathbf x^2)s_{\pmb{\mu}}(\mathbf y^2)}{\prod \limits_{j=1}^N(\mu_{N-j+1}+j-1)! \, \Gamma(\mu_{N-j+1}+\alpha+j)}.
    \end{equation}
    Recall that $s_{\pmb{\mu}}(\mathbf{0})=0$ unless $\pmb{\mu} =\mathbf 0$. As $\mathbf{y} \to \mathbf{0}$, the leading term in the above summation comes from the index $\pmb{\mu} = \mathbf{0}$. Moreover, since $s_{\pmb{\mu}}(\mathbf x^2)$ and $s_{\pmb{\mu}}(\mathbf y^2)$ always choose the same index $\pmb{\mu}$, we have
    \begin{equation} \label{det-schur-limit}
      \sum_{\pmb{\mu}}\frac{ 2^{-\sum\limits_{j=1}^N 2\mu_j-N(N-1)}s_{\pmb{\mu}}(\mathbf x^2)s_{\pmb{\mu}}(\mathbf y^2)}{\prod \limits_{j=1}^N(\mu_{N-j+1}+j-1)! \, \Gamma(\mu_{N-j+1}+\alpha+j)} = \frac{2^{- N(N-1)} }{\prod \limits_{j=1}^N(j-1)! \, \Gamma(\alpha+j)} \{1+O(|\mathbf{y}|) \},
    \end{equation}
    as $\mathbf{y} \to \mathbf{0}$ uniformly for all $\mathbf{x}$. Since
     \begin{equation}
       \Delta(\mathbf{x}^2) \sim (2a)^{\frac{N(N-1)}{2}} \Delta(\mathbf{x}-\mathbf{a}) \{ 1 + O(|\mathbf{x} - \mathbf{a}|)\}, \qquad \textrm{as } \mathbf{x} \to \mathbf{a} \quad \textrm{if} \quad a > 0,
     \end{equation}
     we obtain \eqref{q-limit-1} for $a>0$ and \eqref{q-limit-2} for $a=0$, respectively, by using \eqref{q-det-formula-2} and the above three formulae.

    Next, we consider $q^M(\mathbf{x}, \mathbf{y})$ in \eqref{qm-det-formula}.  Similar to the above computations, we obtain from  the multilinearity of determinants and Lemma \ref{lemma1},
    \begin{eqnarray}
      && q^M(\mathbf{x}, \mathbf{y}) = \frac{2^N \prod \limits_{k=1}^N \frac{y_k^{\alpha+1}}{x_k^{\alpha}} }{N! \, M^{2N}} \nonumber \\
            && \qquad \times \sum_{\mathbf n \in \mathbb{N}^N}\left(\prod_{s=1}^N \frac{e^{-\frac{x_{n_s, \alpha}^2}{2M^2}}}{J_{\alpha +1}^2(x_{n_s, \alpha})}\right) \det_{1 \le i, j \le N}\left[J_{\alpha}(\frac{x_{n_j,\alpha}}{M}x_i)\right]\det_{1 \le i, j \le N}\left[J_{\alpha}(\frac{x_{n_j,\alpha}}{M}y_i)\right] . \label{qm-det-formula-two}
    \end{eqnarray}
    Regarding the determinant $\displaystyle \det_{1 \le i, j \le N}[J_{\alpha}(\frac{x_{n_j,\alpha}}{M}x_i)]$, we first expand $J_{\alpha}(\frac{x_{n_j,\alpha}}{M}x_i)$ into a Taylor series at $x_i = a$. Then, from Lemma \ref{lemma2}, we have
    \begin{equation} \label{ja-det-a-asy}
        \begin{split}
            \det_{1 \le i, j \le N}\left[J_{\alpha}(\frac{x_{n_j,\alpha}}{M}x_i)\right]
            &=\det_{1 \le i, j \le N}\left [\sum_{m=0}^{\infty} \frac{J_{\alpha}^{(m)}(\frac{a}{M}x_{n_j,\alpha})}{m!}(\frac{x_{n_j,\alpha}}{M})^m(x_i-a)^m\right]\\
            &=\left(\prod_{j=1}^N\frac{J_{\alpha}^{(j-1)}(\frac{a}{M}x_{n_j,\alpha})}{(j-1)!M^{j-1}}\right)\Delta(\mathbf {x-a})\Delta(\mathbf{x_{n, \alpha}})\times\{1+O(|\mathbf{x-a}|)\},
        \end{split}
    \end{equation}
    as $\mathbf{x} \to \mathbf{a}$.   If the starting point $a=0$, the expansion near $x = a$ is different as $J_\alpha(x)$ has an algebraic singularity at the origin; cf. \eqref{besselj}. In this case, we adopt similar computations in deriving \eqref{I-alpha-det} to obtain
    \begin{eqnarray}
            && \displaystyle \det_{1 \le i, j \le N}\left[J_{\alpha}(\frac{x_{n_j,\alpha}}{M}x_i)\right] =\det_{1 \le i, j \le N}\left[\sum_{m=0}^{\infty}(-1)^m\frac{(\frac{x_{n_j, \alpha}}{2M}x_i)^{2m+\alpha}}{m! \Gamma (m+\alpha+1)}\right] \nonumber \\
            && \displaystyle \qquad \qquad = \frac{\prod\limits_{k=1}^N (x_{n_k, \alpha}x_k)^{\alpha}}{(2M)^{\alpha N}}
             \Delta(\mathbf{x}^2) \Delta (\mathbf{x^{\rm 2}_{n, \alpha}})  \nonumber \\
            && \qquad \qquad \qquad \times  \sum_{\pmb{\mu}}\frac{ (-1)^{\sum\limits_{j=1}^N \mu_j+N(N-1)/2} (2M)^{ -\sum\limits_{j=1}^N 2\mu_j-N(N-1)}s_{\pmb{\mu}}(\mathbf x^2)s_{\pmb{\mu}}(\mathbf{x^{\rm 2}_{n, \alpha}} )}{\prod \limits_{j=1}^N(\mu_{N-j+1}+j-1)! \, \Gamma(\mu_{N-j+1}+\alpha+j)}.
    \end{eqnarray}
  Similar to \eqref{det-schur-limit}, the above formula yields
  \begin{equation} \label{ja-det-0-asy}
  \det_{1 \le i, j \le N}\left[J_{\alpha}(\frac{x_{n_j,\alpha}}{M}x_i)\right] = \frac{\prod\limits_{k=1}^N (x_{n_k, \alpha}x_k)^{\alpha}}{(2M)^{N ( N -1 + \alpha ) }}
             \Delta(\mathbf{x}^2) \Delta (\mathbf{x^{\rm 2}_{n, \alpha}})   \prod \limits_{j=1}^N \frac{ (-1)^{j-1}  }{(j-1)! \, \Gamma(\alpha+j)} \{1+O(|\mathbf{x}|) \},
  \end{equation}
  as $\mathbf{x} \to \mathbf{0}$.
  Of course, similar formula also holds for $\det\limits_{1 \le i, j \le N}\left[J_{\alpha}(\frac{x_{n_j,\alpha}}{M}y_i)\right]$ when $\mathbf{y} \to \mathbf{0}$.

  Finally, with \eqref{qm-det-formula-two}, \eqref{ja-det-a-asy} and \eqref{ja-det-0-asy}, we obtain the desired asymptotics in \eqref{qm-limit-1} for $a>0$ and \eqref{qm-limit-2} for $a =0$, respectively.
\end{proof}

\subsection{Proof of Theorem \ref{the:theorem}}

The asymptotics for $q(\mathbf{x}, \mathbf{y})$ and $q^M(\mathbf{x}, \mathbf{y})$ in the previous are enough for us to derive the probability for the maximum height.

  Recall the expression of $\D \mathbb{P}(\max_{0 < t < 1} b_N(t) <M)$ in terms of $q(\mathbf{x},\mathbf{y})$ and $q^M(\mathbf{x},\mathbf{y})$ in \eqref{p-relation-qqm}. When $a>0$, using the asymptotics obtained in \eqref{q-limit-1} and \eqref{qm-limit-1}, we have
    \begin{equation} \label{distribution}
        \begin{split}
            \mathbb{P}(\max_{0 < t < 1} b_N(t) <M)
            &= c_N(\alpha) M^{-\frac{N(3N+2\alpha+1)}{2}}\frac{1}{N!}\\
            & \hspace{-1cm} \times \sum_{\mathbf n \in \mathbb{N}^N} \left [ \left( \prod_{s=1}^N \frac{(-1)^{s-1}x_{n_s, \alpha}^{ \alpha}J_{\alpha}^{(s-1)}(\frac{a}{M}x_{n_s,\alpha})e^{-\frac{x_{n_s, \alpha}^2}{2M^2}}}{J_{\alpha +1}^2(x_{n_s, \alpha})} \right)  \Delta(\mathbf{x_{n, \alpha}})\Delta(\mathbf{x_{n, \alpha}^2})\right],
        \end{split}
    \end{equation}
    where $c_N(\alpha)$ is given in \eqref{cn-def}.   Now, let us focus on the summation in the above formula and put it into a determinantal form. From the definition of Vandermonde determinants, we have
    \begin{equation*}
     \Delta(\mathbf{x_{n, \alpha}})\Delta(\mathbf{x_{n, \alpha}^2}) = \det_{1 \le i, j \le N}[x_{n_i, \alpha}^{j-1}]\det_{1 \le i, j \le N}[x_{n_i, \alpha}^{2(j-1)}].
    \end{equation*}
   Using \eqref{one-two-det-2}, we get
    \begin{equation*}
        \begin{split}
           & \sum_{\mathbf n \in \mathbb{N}^N} \left(\prod_{s=1}^N \frac{(-1)^{s-1}x_{n_s, \alpha}^{ \alpha}J_{\alpha}^{(s-1)}(\frac{a}{M}x_{n_s,\alpha})e^{-\frac{x_{n_s, \alpha}^2}{2M^2}}}{J_{\alpha +1}^2(x_{n_s, \alpha})} \right) \Delta(\mathbf{x_{n, \alpha}})\Delta(\mathbf{x_{n, \alpha}^2}) \\
            & \qquad = N! \ \sum_{\mathbf n \in \mathbb{N}^N} \left(\prod_{s=1}^N \frac{(-1)^{s-1}x_{n_s, \alpha}^{ \alpha}J_{\alpha}^{(s-1)}(\frac{a}{M}x_{n_s,\alpha})e^{-\frac{x_{n_s, \alpha}^2}{2M^2}}}{J_{\alpha +1}^2(x_{n_s, \alpha})} \right)  \det_{1 \le i, j \le N}\left[x_{n_i, \alpha}^{i+2j-3}\right].
        \end{split}
    \end{equation*}
    Moving the preceding factors into the determinant and using the multilinearity of determinant, we obtain \eqref{deter} from the above formula.

     When $a=0$, from \eqref{q-limit-2} and \eqref{qm-limit-2}, the distribution formula can be written as
    \begin{equation} \label{distribution-2}
        \mathbb{P}(\max_{0 < t < 1} b_N(t) <M)=\widetilde{c}_N (\alpha) M^{-2N(N+\alpha)}\frac{1}{N!}\sum_{\mathbf n \in \mathbb{N}^N} \left [\left (\prod_{s=1}^N \frac{x_{n_s, \alpha}^{2\alpha} e^{-\frac{x_{n_s, \alpha}^2}{2M^2}}}{J_{\alpha +1}^2(x_{n_s, \alpha})}\right ) (\Delta(\mathbf{x_{n, \alpha}^2}))^2 \right ],
    \end{equation}
    where $\widetilde{c}_N (\alpha)$ is given in \eqref{cn2-def}. With the aid of \eqref{one-two-det-2} again, we obtain \eqref{deter-a=0} from above formula by similar computations.

    This completes the proof of Theorem \ref{the:theorem}.
    \hfill \qed

\subsection{Proof of Theorem \ref{thm-op}}

Before proving Theorem \ref{thm-op}, we need one more property for the derivatives of the Bessel function $J_\alpha(z)$.
\begin{lemma}
For $j \in \mathbb{N}$, we have the following differential relations
    \begin{align}
        z^{2j-1}J_{\alpha}^{(2j-1)}(z)&=[(-1)^j z^{2j-1}+P_{2j-3}(z)]J_{\alpha+1}(z)+Q_{2j-2}(z)J_{\alpha}(z), \label{bessel-derivative-1} \\
        z^{2j}J_{\alpha}^{(2j)}(z)&=\widetilde P_{2j-1}(z)J_{\alpha+1}(z)+[(-1)^j z^{2j}+\widetilde Q_{2j-2}(z)]J_{\alpha}(z), \label{bessel-derivative-2}
    \end{align}
    where $P_j(z)$, $Q_j(z)$, $\widetilde P_j(z)$ and $\widetilde Q_j(z)$ are polynomials of degree $j$. Moreover, $P_{2j-1}(z)$ and $\widetilde P_{2j-1}(z)$ are odd functions, while $Q_{2j}(z)$ and $\widetilde Q_{2j}(z)$ are even functions.
\end{lemma}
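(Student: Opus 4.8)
The plan is to prove, by a single induction on the order $n$ of the derivative, the more symmetric statement that for every $n \ge 0$
\[
 z^n J_\alpha^{(n)}(z) = a_n(z)\, J_{\alpha+1}(z) + b_n(z)\, J_\alpha(z),
\]
where $a_n$ is an \emph{odd} polynomial with $\deg a_n \le n$ and $b_n$ is an \emph{even} polynomial with $\deg b_n \le n$; moreover, writing $\gamma_n$ and $\delta_n$ for the (possibly vanishing) coefficients of $z^n$ in $a_n$ and $b_n$, one has $\gamma_{n+1} = -\delta_n$ and $\delta_{n+1} = \gamma_n$, with $\gamma_0 = 0$, $\delta_0 = 1$. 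The lemma then follows by specializing to $n = 2j-1$ and $n = 2j$: an odd polynomial of degree $\le 2j-1$ is of the form $\gamma_{2j-1}z^{2j-1} + P_{2j-3}(z)$ with $P_{2j-3}$ odd and of degree $\le 2j-3$; an even polynomial of degree $\le 2j-1$ is automatically of degree $\le 2j-2$; and the recursion for $(\gamma_n,\delta_n)$ gives $\gamma_{2j-1} = (-1)^j$, $\delta_{2j} = (-1)^j$, which are precisely the asserted leading terms $(-1)^j z^{2j-1}$ in \eqref{bessel-derivative-1} and $(-1)^j z^{2j}$ in \eqref{bessel-derivative-2}.

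For the inductive step I would differentiate the displayed identity, use the contiguous relations $J_\alpha'(z) = \tfrac{\alpha}{z}J_\alpha(z) - J_{\alpha+1}(z)$ and $J_{\alpha+1}'(z) = J_\alpha(z) - \tfrac{\alpha+1}{z}J_{\alpha+1}(z)$ (see \cite[Sec.~10.6]{dlmf}) to keep everything inside the span of $J_\alpha$ and $J_{\alpha+1}$, multiply through by $z$ to clear denominators, and substitute $n\, z^n J_\alpha^{(n)} = n(a_n J_{\alpha+1} + b_n J_\alpha)$. This yields the two-term recursion
\[
 a_{n+1} = z\, a_n' - (n+\alpha+1)\, a_n - z\, b_n, \qquad b_{n+1} = z\, a_n + z\, b_n' + (\alpha - n)\, b_n,
\]
with $a_0 = 0$, $b_0 = 1$. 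Since $\deg(z a_n') \le \deg a_n \le n$, the degree bounds $\deg a_{n+1},\deg b_{n+1} \le n+1$ propagate at once; the parity statement propagates because $a_n$ odd makes $a_n'$ even and hence $z a_n'$ odd, while $z b_n$ is odd, so $a_{n+1}$ is odd, and symmetrically $b_{n+1}$ is even. Comparing the coefficients of $z^{n+1}$ in the recursion — all of $z a_n'$, $a_n$, $z b_n'$, $b_n$ have degree $\le n$, so the coefficient of $z^{n+1}$ in $a_{n+1}$ comes only from $-z b_n$ and the coefficient of $z^{n+1}$ in $b_{n+1}$ only from $z a_n$ — gives exactly $\gamma_{n+1} = -\delta_n$ and $\delta_{n+1} = \gamma_n$, from which $\gamma_{2j-1} = (-1)^j$, $\delta_{2j} = (-1)^j$, and consistently $\gamma_{2j} = \delta_{2j-1} = 0$.

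The work is essentially bookkeeping: carrying the parity, the degree bounds, and the two leading coefficients simultaneously through one induction, and then re-packaging the unified identity into the split form \eqref{bessel-derivative-1}--\eqref{bessel-derivative-2}. (An alternative route would start from \eqref{kthde} and collapse $J_{\alpha+k}$, $k\ge 2$, onto $J_\alpha,J_{\alpha+1}$ via repeated use of $J_{\alpha+k+1} = \tfrac{2(\alpha+k)}{z}J_{\alpha+k} - J_{\alpha+k-1}$, but the direct induction above is cleaner and produces the parity and leading-coefficient data for free.) The only mild subtlety I would flag explicitly is that the intermediate polynomials $P_{2j-3}$, $Q_{2j-2}$, $\widetilde P_{2j-1}$, $\widetilde Q_{2j-2}$ need not attain their nominal degree for every value of $\alpha$ (for instance the degree-$2$ polynomial $Q$ degenerates when $\alpha = 1$), so the statement should be read as a degree bound; this costs nothing in the subsequent application, which only uses the explicitly identified leading monomials.
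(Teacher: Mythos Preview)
Your proof is correct and follows essentially the same approach as the paper: both argue by induction on the order of the derivative, differentiating the inductive hypothesis and using the contiguous relations $zJ_\alpha'(z) = \alpha J_\alpha(z) - zJ_{\alpha+1}(z)$ and $zJ_{\alpha+1}'(z) = zJ_\alpha(z) - (\alpha+1)J_{\alpha+1}(z)$ to stay inside the span of $J_\alpha, J_{\alpha+1}$; your packaging is a bit cleaner (a single induction on $n$ with the pair $(a_n,b_n)$ rather than alternating between the odd and even cases), and your remark that the subsidiary polynomials may fail to attain their nominal degree for special $\alpha$ is a fair caveat the paper glosses over, but the underlying argument is the same.
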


\begin{proof}
We prove this lemma  by induction. It is well-known that the Bessel functions satisfy the following recurrence relations
    \begin{align}\label{dr2}
        zJ_{\alpha}'(z)=zJ_{\alpha-1}(z)-\alpha J_{\alpha}(z), \qquad
        zJ_{\alpha}'(z)=-zJ_{\alpha+1}(z)+ \alpha J_{\alpha}(z);
    \end{align}
    see \cite[Eq. (10.6.2)]{dlmf}. From the above formulae, it is easy to see
    \begin{equation}
   z^2J_{\alpha}''(z)=z J_{\alpha+1}(z)+(-z^2 + \alpha^2 - \alpha)J_{\alpha}(z).
    \end{equation}
     The above formulae show that the lemma holds for $j=1$.

    Assume the lemma holds for  $j=k$. Then, we have
    \begin{equation*}\label{dr1}
        z^{2k}J_{\alpha}^{(2k)}(z)=\widetilde P_{2k-1}(z)J_{\alpha+1}(z)+[(-1)^k z^{2k}+\widetilde Q_{2k-2}(z)]J_{\alpha}(z).
    \end{equation*}
    Taking derivative on both sides of the above formula and using \eqref{dr2} again, we obtain
    \begin{equation}\label{dr3}
            z^{2k+1}J_{\alpha}^{(2k+1)}(z)=[(-1)^{k+1}z^{2k+1}+P_{2k-1}(z)]J_{\alpha+1}(z)+Q_{2k}(z)J_{\alpha}(z)
    \end{equation}
    with
    \begin{align*}
        &P_{2k-1}(z)=-(2k+\alpha+1)\widetilde P_{2k-1}(z)+z\widetilde P'_{2k-1}(z)-z\widetilde Q_{2k-2}(z),\\
        &Q_{2k}(z)=\alpha (-1)^{k}z^{2k}+z\widetilde P_{2k-1}(z)+(\alpha -2k)\widetilde Q_{2k-2}(z)+z\widetilde Q'_{2k-2}(z).
    \end{align*}
    So, \eqref{bessel-derivative-1} holds for $j = k+1$.  Differentiating \eqref{dr3} one more time, we also get \eqref{bessel-derivative-2} for $j = k+1$ through similar computations.

    This finish the proof of the lemma.
\end{proof}

With the above preparation, we prove our second theorem below.

\medskip

\noindent \emph{Proof of Theorem \ref{thm-op}.} Let us focus on the determinant in \eqref{deter}. The entries in the first row are given by
\begin{equation} \label{hankl-row1}
     \sum_{n=1}^{\infty}\frac{x_{n, \alpha}^{2j+ \alpha-2}e^{-\frac{x_{n, \alpha}^2}{2M^2}}}{J_{\alpha+1}^2(x_{n, \alpha})}J_{\alpha}(\frac{a}{M}x_{n,\alpha}) .
 \end{equation}
For the second row, by using \eqref{dr2}, we replace $J_{\alpha}'(\frac{a}{M}x_{n,\alpha})$ with $J_{\alpha+1}(\frac{a}{M}x_{n,\alpha}) $ and $J_{\alpha}(\frac{a}{M}x_{n,\alpha})$. This gives us
\begin{equation*}
     \sum_{n=1}^{\infty}\frac{x_{n, \alpha}^{2j+ \alpha-2}e^{-\frac{x_{n, \alpha}^2}{2M^2}}}{J_{\alpha+1}^2(x_{n, \alpha})}\left[x_{n, \alpha}J_{\alpha+1}(\frac{a}{M}x_{n,\alpha})-\frac{\alpha M}{a}J_{\alpha}(\frac{a}{M}x_{n,\alpha}) \right]  .
 \end{equation*}
 Note that the coefficient of $J_\alpha$ term is independent of $n$. Applying a row operation, we eliminate the $J_\alpha$ term in the second row and obtain
 \begin{equation} \label{hankl-row2}
     \sum_{n=1}^{\infty}\frac{x_{n, \alpha}^{2j+ \alpha-2}e^{-\frac{x_{n, \alpha}^2}{2M^2}}}{J_{\alpha+1}^2(x_{n, \alpha})}x_{n, \alpha}J_{\alpha+1}(\frac{a}{M}x_{n,\alpha}).
 \end{equation}
Regarding the third row, we use \eqref{bessel-derivative-2} to replace $x_{n,\alpha}^2 J_\alpha''(\frac{a}{M}x_{n,\alpha})$   and obtain
 \begin{equation*}
     \sum_{n=1}^{\infty}\frac{x_{n, \alpha}^{2j+ \alpha-2}e^{-\frac{x_{n, \alpha}^2}{2M^2}}}{J_{\alpha+1}^2(x_{n, \alpha})}\left[ d_1 x_{n, \alpha}J_{\alpha+1}(\frac{a}{M}x_{n,\alpha}) + \left( -x_{n, \alpha}^2 + d_0 \right) J_{\alpha}(\frac{a}{M}x_{n,\alpha})\right].
 \end{equation*}
Note that $d_0$ and $d_1$ in the above formulae are two $n$-independent constants. Then, with the first two rows given in \eqref{hankl-row1} and \eqref{hankl-row2}, we again apply row operations to simplify the entries in the third row as
 \begin{equation}
     \sum_{n=1}^{\infty}\frac{x_{n, \alpha}^{2j+ \alpha-2}e^{-\frac{x_{n, \alpha}^2}{2M^2}}}{J_{\alpha+1}^2(x_{n, \alpha})} \left[-x_{n, \alpha}^2J_{\alpha}(\frac{a}{M}x_{n,\alpha}) \right] .
 \end{equation}
One may repeat the above procedures by first replacing the $(-1)^{i-1} x_{n,\alpha}^{i-1} J_\alpha^{(i-1)}(\frac{a}{M}x_{n,\alpha})$ term with $J_{\alpha+1}$ and $J_{\alpha}$ functions, then apply row operations to simplify the entries. Here, we also need the properties that the polynomials $P_{2k-1}$, $\widetilde P_{2k-1}$ and $\widetilde Q_{2k}$, $Q_{2k}$  in \eqref{bessel-derivative-1} and \eqref{bessel-derivative-2} are even or odd functions. Therefore, the determinant in \eqref{deter} can be put into a form as follows:
\begin{equation} \label{moment}
    \det_{1 \le i, j \le N} \left[\sum_{n=1}^{\infty}\frac{(-1)^{i-1}x_{n, \alpha}^{i+2j+\alpha-3}J_{\alpha}^{(i-1)}(\frac{a}{M}x_{n,\alpha})e^{-\frac{x_{n, \alpha}^2}{2M^2}}}{J_{\alpha+1}^2(x_{n, \alpha})} \right] =\det_{1 \le i, j \le N} \left[\sum_{n=1}^{\infty}x_{n, \alpha}^{2j + \alpha-2}\frac{h_i(x_{n, \alpha}) e^{-\frac{x_{n, \alpha}^2}{2M^2}}}{J_{\alpha+1}^2(x_{n, \alpha})} \right],
\end{equation}
where
\begin{equation*}
    h_i(x)=  \begin{cases}
        (-1)^{\frac{i-1}{2}} x^{i-1} J_{\alpha}(\frac{a}{M}x),& \mbox{if } i \mbox{ is odd}, \\
        (-1)^{\frac{i-2}{2}} x^{i-1} J_{\alpha+1}(\frac{a}{M}x),& \mbox{if } i \mbox{ is even}.
        \end{cases}
\end{equation*}
Note that the entries in the right-hand side of \eqref{moment} are indeed the moments $m_{k}^{(i)}, \ i=1,2,$ defined in \eqref{mk-def}.
Finally, we obtain \eqref{hankel} by  interchanging rows in \eqref{moment}, such that all the ones containing $J_{\alpha}(\frac{a}{M}x_{n,\alpha})$ are moved to the top block, while the ones with  $J_{\alpha+1}(\frac{a}{M}x_{n,\alpha})$ are moved to the bottom block.

For the case $a=0$, it is straightforward to see that the entries in the determinant of \eqref{deter-a=0} are entries associated with the discrete orthogonal polynomials defined in \eqref{mk-def-2}. Then, the formula \eqref{hankel-a=0} follows immediately.

This completes the proof of Theorem \ref{thm-op}.
\hfill \qed

\section{Numerical simulations and conclusions}\label{sec:discussion}

\subsection{Numerical simulations}

Based on the explicit formulae in Theorem \ref{the:theorem} and \ref{thm-op}, we compute $\D \mathbb{P}(\max_{0 < t < 1} b_N(t) <M)$ numerically. We set the number of Bessel paths $N$ to be $10$ and the order of Bessel functions $\alpha$ to be 1 as an illustration.

As expected, when the starting point $a$ is fixed, the probability $\D \mathbb{P}(\max_{0 < t < 1} b_N(t) <M)$ is positive and increase with respect to $M$. When $M$ is large, the probability tends to 1; see Figure \ref{fig1} and Table \ref{table:M}. When the upper constraint $M$ is fixed, the probability decreases with respect to $a$, and tends to 0 when $a$ is large; see Figure \ref{fig2} and Table \ref{table:a}.

\begin{figure} [h]
  \begin{minipage}[t]{0.45\linewidth}
    \centering
    \includegraphics[width=3in]{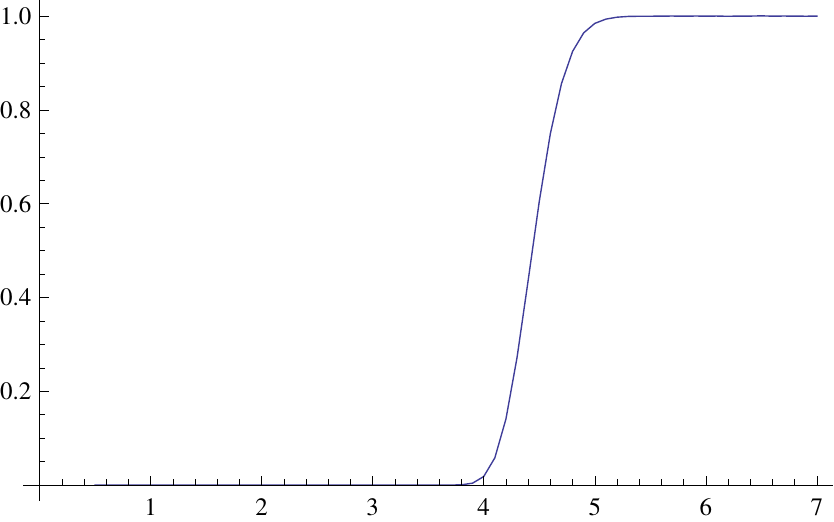}
     \caption{The evolution of the probability $\D \mathbb{P}(\max_{0 < t < 1} b_N(t) <M)$ with respect to $M$ when $N=10, \alpha = 1$ and $a = 1$.}
    \label{fig1}
  \end{minipage}
  \qquad
  \begin{minipage}[t]{0.45\linewidth}
    \centering
    \includegraphics[width=3in]{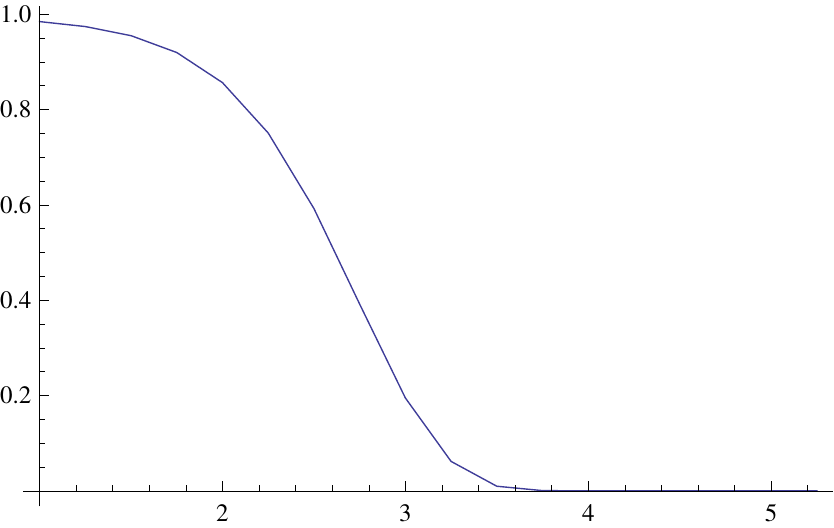}
    \caption{The evolution of the probability $\D \mathbb{P}(\max_{0 < t < 1} b_N(t) <M)$ with respect to $a$ when $N=10, \alpha = 1$ and $M=5$.}
    \label{fig2}
  \end{minipage}
\end{figure}

\begin{minipage}[b]{.45\textwidth}
  \centering
  \begin{tabular}{|c|c|}
\hline
M    & $\displaystyle \mathbb{P}(\max_{0<t<1}b_N(t)<M)$  \\ \hline
3.25 & $2.3576\times 10^{-10}$ \\
3.5  & $9.82109\times 10^{-8}$ \\
3.75 & 0.000233222                    \\
4    & 0.0183778                      \\
4.25 & 0.200657                       \\
4.5  & 0.604697                       \\
4.75 & 0.894626                       \\
5    & 0.984423                       \\
5.25 & 0.998446                       \\
5.5  & 0.999666                       \\
5.75 & 0.999961                       \\
6    & 0.999919                       \\
 \hline
\end{tabular}
\captionof{table}{The probability with $N=10, \alpha = 1$ and $a = 1$.}
   \label{table:M}
\end{minipage}\qquad
\begin{minipage}[b]{.45\textwidth}
   \centering
   \begin{tabular}{|c|c|}
\hline
$a$    & $\displaystyle \mathbb{P}(\max_{0<t<1}b_N(t)<M)$       \\ \hline
1    & 0.984373                        \\
1.25 & 0.974007                        \\
1.5  & 0.954902                        \\
1.75 & 0.919556                        \\
2    & 0.856389                        \\
2.25 & 0.750728                        \\
2.5  & 0.592312                        \\
2.75 & 0.391808                        \\
3    & 0.194887                        \\
3.25 & 0.0617876                       \\
3.5  & 0.00968417                      \\
3.75 & 0.000500336                     \\
  \hline
\end{tabular}
\captionof{table}{The probability with $N=10, \alpha = 1$ and $M = 5$.}
\label{table:a}

\end{minipage}

\subsection{Conclusions}

In summary, we derive the probability $\D \mathbb{P}(\max_{0 < t < 1} b_N(t) <M)$ for the maximum height of $N$ non-intersecting Bessel paths exactly. The probability can be put in terms of Hankel determinants associated with the multiple discrete orthogonal polynomials or discrete orthogonal polynomials, respectively, depending on the starting point $a>0$ or $a=0$. It is well-known that the Deift-Zhou nonlinear steepest descent method for Riemann-Hilbert problems is a powerful tool to study asymptotic problems related to (multiple) orthogonal polynomials. Therefore, this relation is important for us to derive the asymptotics of $\D \mathbb{P}(\max_{0 < t < 1} b_N(t) <M)$ as the number of paths $N \to \infty$. With suitable scaling, we expect the limiting distribution converges to the Tracy-Widom distribution for the largest eigenvalue of the Gaussian orthogonal ensemble due to its close relation to the Airy$_2$ process. We will leave the asymptotic study in a forthcoming publication.

\section*{Acknowledgments}

This work was partially supported by grants from the Research Grants Council of the Hong Kong Special Administrative Region, China (Project No. CityU 11300115, CityU 11303016), and by grants from City University of Hong Kong (Project No. 7004864, 7005032).

\end{document}